\definecolor{shadecolor}{gray}{0.75}
\definecolor{lightgreen}{HTML}{008A00}
\definecolor{darkgreen}{HTML}{005700}
\definecolor{lightpurple}{HTML}{e1d5e7}
\definecolor{darkpurple}{HTML}{9673A6}
\definecolor{lightyellow}{HTML}{E3C800}
\definecolor{darkyellow}{HTML}{B09500}
\newcommand{\greencircle}[1]{{\pgfkeys{/csteps/inner color=white}\pgfkeys{/csteps/outer color=darkgreen}\pgfkeys{/csteps/fill color=lightgreen}\Circled{#1}}}
\newcommand{\yellowcircle}[1]{{\pgfkeys{/csteps/inner color=black}\pgfkeys{/csteps/outer color=darkyellow}\pgfkeys{/csteps/fill color=lightyellow}\Circled{#1}}}
\newcommand{\purplecircle}[1]{{\pgfkeys{/csteps/inner color=black}\pgfkeys{/csteps/outer color=darkpurple}\pgfkeys{/csteps/fill color=lightpurple}\Circled{#1}}}
\renewcommand{\ss}{\textsc{Static}\xspace}
\newcommand{\rs}{\textsc{Rotor}\xspace}
\newcommand{\cs}{\textsc{Demand-aware}\xspace}
\newcommand{\rsn}{\emph{rotor-net}\xspace}
\newcommand{\msn}{\emph{mix-net}\xspace}
\newcommand{\csn}{\emph{da-net}\xspace}
\newcommand{\salg}{\ifmmode \mathcal{A}\xspace \else $\mathcal{A}\xspace$ \fi}
\newcommand{\para}[1]{\vspace{2pt} \noindent \textbf{#1:}\xspace}
\newcommand{\rrec}{R_{r}}
\newcommand{\crec}{R_{d}}
\newcommand{\snum}{k_{s}}
\newcommand{\rnum}{k_{r}}
\newcommand{\cnum}{k_{d}}
\newcommand{\dctrot}{DCT_{rot}}
\newcommand{\dctda}{DCT_{da}}
\newcommand{\dctmix}{DCT_{mix}}
\newcommand{\eff}{\eta} 
\newcommand{\talg}{\texttt{GreedyMixNet}\xspace} 
\newcommand{\gmn}{\texttt{GMN}\xspace} 
\newcommand{\bvn}{\texttt{BvN}\xspace}
\newcommand{\val}{\texttt{Val}\xspace}
\newcommand{\system}{{\sc{D3}}\xspace}
\newcommand{\opera}{{\sc{Opera}}\xspace}
\newcommand{\duo}{{\sc{Duo}}\xspace}
\newcommand{\debruijn}{de Bruijn\xspace}
\newcommand{\sirius}{{\sc{Sirius}}\xspace}
\newcommand{\hadoop}{\textsc{Hadoop}\xspace}
\newcommand{\websearch}{\textsc{Websearch}\xspace}
\newcommand{\datamining}{\textsc{Datamining}\xspace}
\newcommand{\simalg}{\mathcal{A}_{\mathrm{hor}}}
\newcommand{\bvnalg}{\mathcal{A}_{\mathrm{ver}}}
\def\gray{\cellcolor{gray!30}}
\newcommand{\card}[1]{\lvert #1 \rvert}
\def\nspace{\vspace{-5pt}}
\newtheorem{observation}{Observation}
\newtheorem{claim}{Claim}
\newtheorem{definition}{Definition}
\newtheorem{theorem}{Theorem}[section]
\newcommand{\revision}[1]{\color{blue}{#1}\color{black}}
\renewcommand{\revision}[1]{#1}
\newcommand{\revisiontwo}[1]{\color{violet}{#1}\color{black}}
\renewcommand{\revisiontwo}[1]{#1}
\renewcommand\footnotetextcopyrightpermission[1]{} 
\begin{document}
\title{D3: An Adaptive Reconfigurable Datacenter Network
}


\author{Johannes Zerwas} 
\affiliation{%
  \institution{TUM School of Computation, Information and Technology, \\
  Technical University of Munich}
  \city{Munich}
  \country{Germany}
}

\author{Chen Griner} 
\affiliation{%
  \institution{School of Electrical and Computer Engineering, Ben-Gurion University of the Negev}
  \city{Beer-Sheva}
   \country{Israel}
}

\author{Stefan Schmid}
\affiliation{%
   \institution{TU Berlin \& Fraunhofer SIT}
   \city{Berlin}
   \country{Germany}
}

\author{Chen Avin}
\affiliation{%
   \institution{School of Electrical and Computer Engineering, Ben-Gurion University of the Negev}
   \city{Beer-Sheva}
   \country{Israel}
}

\begin{abstract}
The explosively growing communication traffic in datacenters
imposes increasingly stringent performance requirements on
the underlying networks. Over the last years, researchers have
developed innovative optical switching technologies that enable reconfigurable datacenter networks (RCDNs) which support very fast topology reconfigurations.

This paper presents \system,  
a novel and feasible RDCN architecture that improves throughput and flow completion time. 
 \system quickly and jointly adapts its links and packet scheduling toward the evolving demand, combining both demand-oblivious and demand-aware behaviors when needed. 
\system relies on a decentralized network control plane supporting greedy, integrated-multihop, IP-based routing, allowing to react, quickly and locally, to topological changes without overheads. 
A rack-local synchronization and transport layer further support fast network adjustments. 
\revision{Moreover, we argue that \system can be implemented using
the recently proposed Sirius architecture (SIGCOMM 2020).} 

We report on an extensive empirical evaluation using packet-level simulations. We find that \system improves throughput by up to 15\% and preserves competitive flow completion times compared to the state of the art. 
We further provide an analytical explanation of the superiority of \system, introducing an extension of the well-known Birkhoff-von Neumann decomposition, which may be of independent interest. 
\end{abstract}

\maketitle

\sloppy

\section{Introduction}
Communication traffic in datacenters is growing explosively.
This is due to the popularity of data-centric cloud applications such as batch processing and distributed machine learning (ML), and the trend towards resource disaggregation in datacenters~\cite{talk-about,li2019hpcc}.
This results in increasingly stringent performance requirements on the underlying datacenter networks (DCNs), which are reaching their capacity limits. 
Accordingly, over the last few years, researchers have made great efforts to improve the capacity of DCNs~\cite{clos,jupiter,f10,bcube,mdcube,xpander,jellyfish}.

A particularly innovative approach is to render DCNs dynamic and \emph{reconfigurable}~\cite{osn21}: emerging optical switching technologies allow to quickly change the network topology. In principle, such reconfigurable datacenter networks~(RDCNs) enable us to better use the network capacity by providing topological shortcuts, hence reducing multi-hop forwarding and saving \emph{bandwidth tax} at the cost of \emph{latency tax} \cite{sigmetrics22cerberus}. Indeed, it has been shown that even \emph{demand-oblivious} RDCNs such as RotorNet~\cite{rotornet}, Opera~\cite{opera}, Sirius~\cite{sirius}, and Mars~\cite{sigmetrics23mars}, whose topology changes periodically, can significantly improve the throughput in datacenters. 
\emph{Demand-aware} RDCNs even allow to adapt the topology to account for structure in the workload~\cite{tracecomplexity,benson2010network,facebook,kandula2009nature,mogul2012we,DBLP:journals/cn/ZouW0HCLXH14, datacenter_burstiness}, and, e.g., optimize the rack-to-rack interconnect toward elephant flows. 
Demand-aware networks such as ProjecToR~\cite{projector},
Gemini~\cite{zhang2021gemini},
or Cerberus~\cite{sigmetrics22cerberus}, among  others~\cite{zhou2012mirror,kandula2009flyways,firefly,osa,100times,fleet,flexspander,megaswitch,eclipse,helios,mordia,cthrough,apocs21renets,splaynet,schwartz2019online,dinitz2020scheduling,spaa21rdcn,perf20bmatch,poutievski2022jupiter}, 
successfully exploit the spatial and temporal locality in traffic patterns to improve performance, even if reconfigurations are performed infrequently~\cite{zhang2021gemini,poutievski2022jupiter}. 

\begin{figure}[t]
    \centering
    \hspace{-10pt}
    \includegraphics[width=0.48\textwidth,trim=0cm 0.32cm 0.0cm 0.4cm,clip]{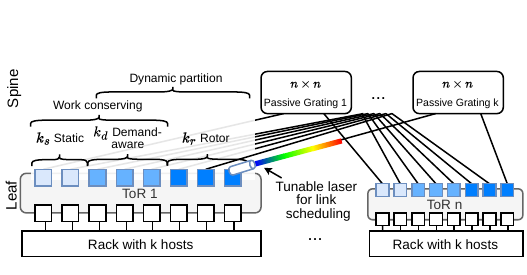}
    \nspace
    \caption{Overview of \system's architecture. The solid lines indicate bidirectional links. ToRs connect in a two layer leaf-spine topology to passive gratings. A tuneable laser at the transceivers can adjust the wavelength to select an egress port. There are three links (ports) scheduler classes \ss, \cs, and \rs.
    }
    \nspace
    \label{fig:trio:architecture}
\end{figure}

Accordingly, the best (dynamic) topology depends on the traffic and applications it needs to serve~\cite{sigmetrics22cerberus}. For example, shuffling traffic of map-reduce applications, or all-gather in ML, will benefit from demand-oblivious RCDNs, based on optical \emph{rotor} switches~\cite{rotornet,opera,sirius}, due to the all-to-all nature of the traffic. At the same time, workloads that are skewed, like reduce-based operations in ML or applications with large flow sizes like in datamining~\cite{greenberg2009vl2} are best served on demand-aware RDCNs, using demand-aware optical switches~\cite{zhang2021gemini,osa,fleet}.
Latency-critical mice flows, on the other hand, are best served on static topologies such as fat-trees or Xpander~\cite{clos,jupiter,f10,xpander},
which do not have latency tax resulting from reconfiguration delays.
Matching traffic patterns to the right topology designs is, hence, critical for performance \cite{sigmetrics22cerberus}. 

While communication patterns in datacenters naturally change over time (e.g., due to increasing loads or evolving applications), it can be difficult in practice to tailor datacenter networks toward their demand.
Our proposed system, \system, is an adaptive network design that enables addressing such an evolution in a practical manner, by dynamically splitting its infrastructure between different modes of operations.

Our paper is motivated by recent optical switching technology introduced in Sirius~\cite{sirius} which enables us to perform topology engineering at the Top-of-the-Rack (ToR). Sirius uses nanosecond tunable lasers at the leaves (ToR switches) and passive gratings that route light based on wavelengths at its spine switches. To change the topology, the optical circuits are reconfigured via the lasers. Controlling the topology boils down to what we call  \emph{link scheduling}, scheduling the wavelengths that each laser uses at any time.
Sirius realizes its RDCN based on a demand-oblivious topology, using demand-oblivious link scheduling. However, as the authors of Sirius already observed, the optical switches can also be used for adaptive link scheduling in a demand-aware manner. 
In fact, we claim that the switch can even be used in a polymorph manner, to quickly change between demand-oblivious, demand-aware, or static link scheduling.
Changing the optical topology, whether demand-oblivious or demand-aware, simply means changing the type of scheduler that controls the tunable lasers. 
Moreover, different lasers (ports), even at the same ToR, can use different types of link schedulers.
This, in principle, enables fully dynamic, \emph{self-adjusting} datacenter networks: networks whose topology can instantaneously be optimized towards the traffic at any time.

\noindent \textbf{Our contribution.}
This paper explores how to design such flexible self-adjusting datacenter networks. In particular, we propose a novel datacenter architecture, \system, based on the Sirius~\cite{sirius} architecture, but whose topology consists of different sub-topology components (namely a static, a dynamic demand-oblivious, and a dynamic demand-aware sub-topology) that can be adjusted quickly toward the evolving traffic. \autoref{fig:trio:architecture} presents \system's architecture, and we explain it in more details in \S \ref{sec:system}. We further study how to support such flexible architectures using efficient and \emph{jointly} optimized link scheduling (to realize the dynamic topology) and packet scheduling (the packet forwarding strategy), as well as network and transport layers. To this end, \system relies on a decentralized network control plane supporting greedy integrated-multihop, IP-based routing, which allows to quickly and locally react to topological changes without overheads and with minimal packet reordering. \system employs different transport protocols for the different sub-topologies: latency-sensitive (small) flows
are transmitted on the static topology using NDP~\cite{ndp}, flows sent via rotors benefit from LocalLB (a local version of RotorLB~\cite{rotornet}), and standard TCP is used for flows transmitted over the demand-aware ports. 

\system comes with theoretical underpinnings, and we explain {\sc{D3}}`s superiority analytically. That is, we prove that a mixture of demand-aware and demand-oblivious scheduling can improve the \emph{demand completion time} of a scheduling that uses only a single type. 
For our analysis, we contribute a novel extension of the Birkhoff–von Neumann (BvN) matrix decomposition which supports mixed topologies and which may be of independent interest.

We further evaluate \system using extensive simulations with realistic and synthetic workloads and find that it improves throughput by up to 15\% while preserving or even improving flow completion times compared to the state-of-the-art. 

We present the main concepts within the paper, and defer some technical details to the Appendix. \emph{The paper raises no ethical concerns.}

\begin{table}[t]
   \footnotesize
    \centering
    \begin{tabular}{|m{2.7cm}|m{1.9cm}|m{2.7cm}|}
    \hline
    Concept (Approach) & Adopted from & \system improvement   \\ \hline \hline
    Demand-aware links \newline (Distributed matching) & ProjecToR~\cite{projector} &  \gray Non-segregated routing \\ \hline
    Rotor scheduling \newline (RotorLB) & RotorNet~\cite{rotornet} &  \gray Local LB \\ \hline
    ToR tunable lasers \newline (Passive gratings) & Sirius~\cite{sirius} & \gray Demand-aware ports \\ \hline
    Three sub-topologies \newline (Static partition) & Cerberus~\cite{sigmetrics22cerberus} &  \gray Dynamic partition  \\ \hline
    Greedy routing \newline (\debruijn topology) & Duo~\cite{zerwas2023duo} & \gray Rotor ports \\ \hline
    Birkhoff–von Neumann \newline (Greedy decomposition) & Eclipse~\cite{eclipse} & \gray Mixed decomposition \\ \hline
    \end{tabular}
     \caption{\system combines and extends several important concepts from existing systems. 
    }
    \nspace
    \nspace
    \label{tab:sota}
\end{table}

\noindent \textbf{Putting things into perspective and related work.}
To achieve its goals, the design and implementation of \system stand on the shoulders of giants 
by combining and extending several important concepts and approaches from existing systems. \autoref{tab:sota} summarizes the main points.
Similar to systems such as ProjecToR~\cite{projector}, \system leverages demand-aware links that are scheduled in a decentralized manner, using distributed matching algorithms; however, \system additionally supports non-segregated routing, enabling an improved utilization of the available network resources. Similar to systems such as RotorNet~\cite{rotornet}, \system leverages fast rotor scheduling to shuffle some of its traffic in a demand-oblivious manner quickly; however, in contrast to the RotorLB approach in prior work,  \system employs a local approach, which significantly reduces control plane overheads and does not require global synchronization.  
As mentioned, we use Sirius \cite{sirius} as the infrastructure for \system, and while the Sirius technology makes \system feasible, the two systems are conceptually different. First and foremost, Sirius is demand-oblivious as a philosophy that trades throughput with simplicity. To achieve higher throughput (as demonstrated in the evaluation section), \system requires a more complex, nontrivial control mechanism to enable demand-aware topologies and multi-hop packet forwarding.
Like Cerberus~\cite{sigmetrics22cerberus}, \system is tailored toward the traffic mix it serves, using a network topology that matches the demand; however, while in the Cerberus topology, the partition into static, demand-oblivious, and demand-aware dynamic switches is \emph{fixed}, \system allows for a dynamic adaption of the topology and its behavior, leveraging a flexible link scheduling (enabled by Sirius technology). 
Moreover, while Cerberus presented a more conceptual contribution and a flow-based simulation, \system, presents a feasible implementation and a packet-level simulation, as well as transport layer solutions.   
\system further builds upon concepts from Duo~\cite{zerwas2023duo}; in particular, it supports greedy routing by relying on an enhanced de Bruijn topology; however, \system additionally also supports rotor ports to shuffle traffic quickly. Last but not least, at its core, \system~relies on a greedy Birkhoff-von Neumann (BvN) decomposition algorithm, as also used in Eclipse~\cite{eclipse}. For \system, we extend this BvN algorithm and introduce a novel matrix decomposition that supports mixed topology types.

\section{Motivation for Links  Scheduling}\label{sec:motivation}

Our work is motivated by the observation that in order to maximize throughput, the datacenter topology must adjust to the specific traffic pattern it serves~\cite{sigmetrics22cerberus}.
We illustrate this with a simple motivating example. Consider an RDCN with five hosts and five racks (single host per rack). 
In this example, a \emph{single reconfigurable port} in each ToR switch provides a single, ToR-to-ToR, directed matching (the reconfigurable topology) between the racks (and hosts). The link scheduler determines the current matching (topology) to use.
The throughput-maximizing topology resp.~link scheduling depends on the ($5 \times 5$) host-to-host demand matrix.

\begin{figure}[t]
\begin{tabular}{cc}
    \raisebox{11pt}{\includegraphics[width=0.08\textwidth]{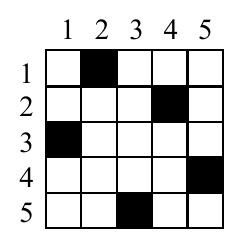}} & 
    \includegraphics[width=0.38\textwidth]{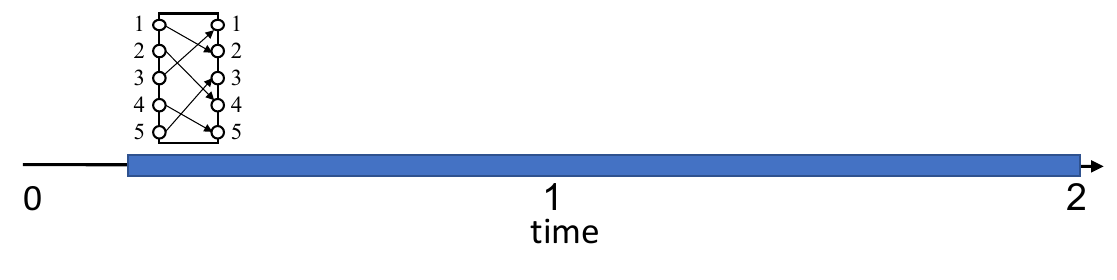} \\
    \multicolumn{2}{c}{(a) Permutation matrix and its optimal link scheduling.} \\
    \raisebox{11pt}{\includegraphics[width=0.08\textwidth]{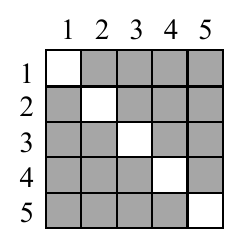}} & 
    \includegraphics[width=0.38\textwidth]{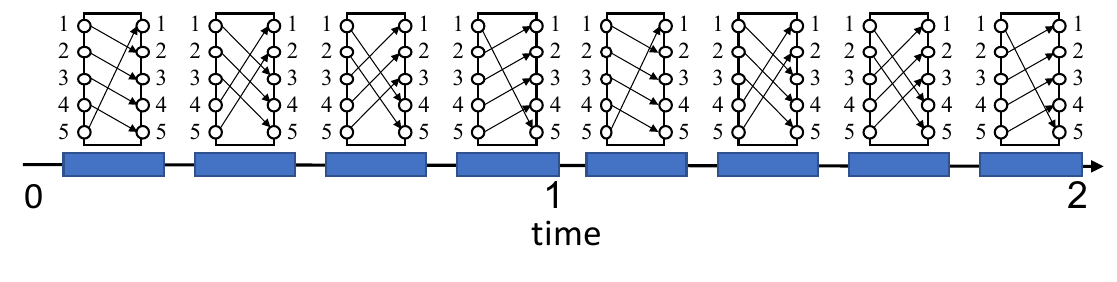} \\
    \multicolumn{2}{c}{(b) Uniform matrix and its optimal link scheduling.} \\
    \raisebox{11pt}{\includegraphics[width=0.08\textwidth]{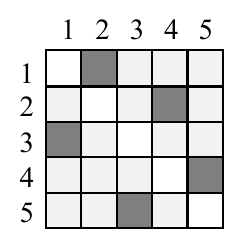}} &
    \includegraphics[width=0.38\textwidth]{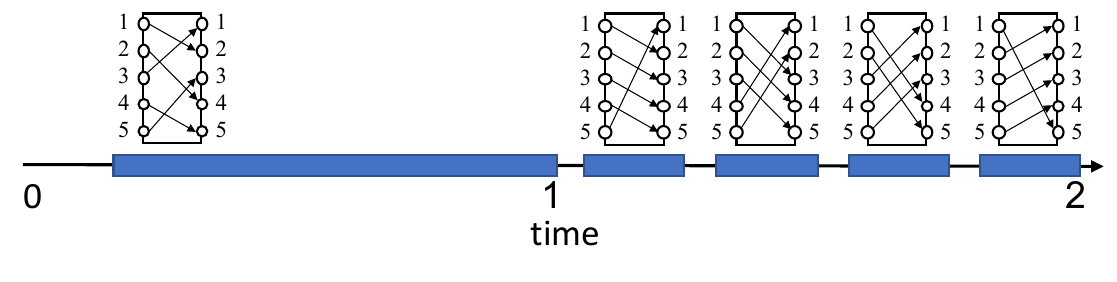} \\
    \multicolumn{2}{c}{(c) Mixed matrix and its optimal link scheduling.}
\end{tabular}
\caption{Motivation for \system: the optimal dynamic topology resp.~ link scheduling depends on the demand matrix, which may change over time.
}
\label{fig:motivation2}
\end{figure}
\autoref{fig:motivation2} presents three different demand matrices: (a) a sparse demand matrix that forms a permutation, (b) a uniform demand matrix where all host pairs have equal demand, and (c) a mixed scenario where darker entries denote more demand.
Next to each matrix, we show its optimal schedules.
The time axis shows two units of time (e.g., 100ms each), and we indicate the time it takes to serve the given demand matrix (the demand-completion time). 
On the time axis, in blue color, we show the circuit-hold times (i.e., the times that circuits can be used to transmit), and the gaps between them represent the reconfiguration times between two consecutive matchings.
Recall that reconfiguration times are typically shorter for demand-oblivious link scheduling (which does not involve any optimization and uses pre-defined matchings) than for demand-aware link scheduling that needs to make decisions, like what links to establish. 
\autoref{fig:motivation2}(a):~A permutation matrix is best served by a single demand-aware matching that provides direct connectivity between the racks that need to communicate. It requires a single links-change (that takes longer) but then makes no more changes. 
\autoref{fig:motivation2}(b):~A uniform demand matrix is best served by a fast and oblivious (periodically) changing sequence of matchings (i.e., a rotor switch), always providing one direct connection between each rack pair during a periodic cycle. 
\autoref{fig:motivation2}(c):~For a mixed demand, a combination of scheduling gives the best results: a demand-aware link scheduling configures a matching for the first unit of time, and then a sequence of fast pre-defined oblivious matchings are used in the second unit of time. We define and show this formally in \S \ref{sec:analysis}.

We conclude that different types of link schedulers (which have a given but different reconfiguration times) are needed to maximize the throughput. Moreover, the \emph{type} of link scheduler on a port may change over time.     
However, realizing such flexible datacenter networks is challenging and requires efficient and practical links and packet schedulers, routing algorithms, and a transport protocol that can support a high degree of dynamics.
In the next section, we describe the design of \system that achieves these goals.

\section{The System Design of \system}\label{sec:system}

This section introduces the system design of \system. We first present the general architecture and topology components. Then, we  describe in more detail the \debruijn-based demand-aware sub-topology and the packet scheduler for the demand-oblivious sub-topology. \revision{We show how integrated forwarding and dynamic port partitioning can be realized in practice and describe the transport layer of \system to fully reap \system's throughput benefits.
Lastly, we discuss how \system can be built on top of Sirius and the implications for cost effectiveness.}

\subsection{Basic Architecture and Link Scheduler}\label{ssec:topology}
\autoref{fig:trio:architecture} overviews \system's architecture. 
\system relies on a two-layer, leaf-spine optical topology that can be described with the ToR-Matching-ToR (TMT) model~\cite{sigmetrics22cerberus}.
The leaf layer consists of $n$ ToR switches $T_i$ ($1\leq i \leq n$).
Each ToR has $k$ up- and $k$ downlink ports of rate $r$. The latter ones connect to end-hosts so that the network contains $h=n\cdot k$ hosts in total.
Since the uplinks are bidirectional, we can further separate them into $k$ unidirectional ingress and $k$ egress ports.
The spine layer consists of $k$ passive gratings\footnote{Originally, $k$ \emph{active} optical spine switches in the TMT model.} connected to the ports at the ToRs.
Tunable lasers in each port of the ToRs adjust the laser wavelength to select the egress port of the grating, and by that changing a topology link, as was proposed and demonstrated in Sirius~\cite{sirius}.
Properly configuring the laser wavelengths at each time slot creates a directed multi-hop topology that is based on a set of $k$-directed matchings between the ToRs, i.e., the $i$-th ports in the ToRs are connected to the $i$-th grating, creating the $i$-th matching (of size $n$). \revisiontwo{This is exactly the structure of  the TMT model. }

In turn, the $k$ matchings (and the $k$ ports in each ToR) are partitioned into three \emph{link scheduling} classes\footnote{Denoted as spine switch types in the TMT model.}: \ss, \rs, and \cs. The sizes of the classes 
are $k_s,k_r,k_d$ respectively, keeping the constraint $k = k_s + k_r + k_d$.
A partition to three classes is implemented by assigning the $k$ ports in each ToR switch to the different link schedulers (\ss, \rs, and \cs)
in a consistent and symmetric way. 
In all ToRs, the same $k_s$ ports use the \ss link scheduler, the same $k_r$ ports use the \rs link scheduler, and the same $k_d$ ports use the \cs link scheduler.
Each of the three link scheduling class creates a different sub-topology that are described in the following:

\para{\textbf{Static sub-topology (\ss)}}
The $\snum$ static ports do not reconfigure the links over time.
The resulting topology can be described as the union
of $\snum$ static matchings. 
The static ports provide basic connectivity between the ToRs and can be used to create $\snum$-regular graphs, such as expander graphs, providing low latency, multi-hop routing for short flows, and control messages.
Specifically, \system relies on \debruijn graphs~\cite{leighton2014introduction} for the \ss topology, which is described later.

\para{\textbf{Demand-aware sub-topology (\cs)}} 
The demand-aware topology consists of a collection of $\cnum$ \cs reconfigurable ports per ToR.
They can flexibly be reconfigured to \emph{any}
possible $\cnum$-regular directed graph
between the ToRs and change it over time.
For example, these ports can be used to create direct connections between ToR pairs with high communication demand.
The pure reconfiguration delay of a link by the lasers has been shown to be in the order of nanoseconds~\cite{sirius}.
However, reconfiguring the \cs ports requires coordination between the ToRs (e.g., for data collection and decision-making). To account for this, we denote by $\crec$ the \emph{reconfiguration delay} of a demand-aware port.
The default value we assume is $\crec=1ms$, but we also evaluate other values in \autoref{sec:evaluation}. 
The circuit-hold time after each reconfiguration is dynamic (not constant) during the operation of a \cs port. 
We require it to be much larger than $\crec$ to achieve a high duty cycle and efficient operation.

\para{\textbf{Rotor-based sub-topology (\rs)}}
The rotor-based topology is formed by the set of $\rnum$ \rs ports per switch.
Each \rs port cycles through $n-1$ predefined matchings, emulating a fully-connected network (i.e., complete graph). 
Every \rs port cycles through the same $n-1$ matchings but using a different time shift. This synchronization provides an average \emph{cycle time} of $\frac{n-1}{\rnum}$ slots to complete a single emulation of a complete graph between all ToRs. The symmetry between the ports' link scheduling facilitates a flexible way to enable or disable a \rs link scheduler on a port since the only difference between the \rs ports is their time shift. 
The slot time of the \rs link scheduling class is defined by a circuit-hold time, denoted as $\delta$, plus a reconfiguration time denoted as $\rrec$, which includes the physical link reconfiguration (wavelength change) and additional quiet time to empty active links. The \emph{duty cycle} $\eff$ is the fraction of time traffic can be sent in a slot (i.e., $\eff = \frac{\delta}{\delta+\rrec}$).
The slot time is tuneable and depends on the reconfiguration time, where a reasonable setup is 
to achieve $\eff >90\%$ as in~\cite{rotornet,opera,sirius}. Here, we assume $\eff\approx98\%$ as in~\cite{opera}.

Since the partition of sub-topologies is determined 
by the link scheduling classes of the ports, but not by the underlying infrastructure, the assignment of ports to link scheduling classes can change over time and, by that, change the sub-topologies sizes. In contrast to previous proposals~\cite{sigmetrics22cerberus}, this adds a new dimension of demand-awareness to \system where $k_s, k_d$, and $k_r$ can dynamically adapt.
For instance, a \cs port can become a \rs port if the traffic pattern has changed such that the \rs sub-topology component is under-provisioned.
While this \emph{dynamic partitioning} of the topology can generally be used for all three classes, \system limits it to the \cs and \rs ports.

The three sub-topologies have different properties for the packet forwarding behavior:
\ss and \cs can utilize work-conserving forwarding (in the sense of store and forward), whereas \rs requires a \emph{packet scheduling} algorithm to transmit packets successfully.
Both parts are introduced in the subsequent sections.

\nspace
\subsection{\ss and \cs Topology}
To implement the sub-topology with work-conserving forwarding, \system augments a \debruijn graph-based static topology built from the \ss ports with dynamic connections (short-cuts) on the \cs ports.
This concept has recently been shown to be a promising candidate for high-throughput topologies~\cite{zerwas2023duo}.
In particular, and in contrast to previous proposals~\cite{projector,sigmetrics22cerberus}, it supports \emph{integrated multi-hop} routing across links from both the \ss and \cs topology parts, i.e., a single packet can traverse both sub-topologies to reach the destination. 
Moreover, the structural properties of the \debruijn graph enable (IP-based) \emph{greedy} routing using small forwarding tables~\cite{zerwas2023duo}.

\begin{figure}[t]
\centering
    \includegraphics[width=0.85\linewidth]{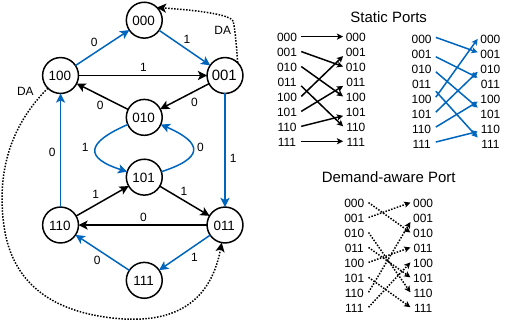} 
    \vspace{-10pt}
    \caption{\revisiontwo{An example of \system's backbone sub-topology with 8 ToRs: A \debruijn topology established by two \ss ports and one \cs port. Top right: two matchings for the \ss ports, bottom right: the matching of the \cs ports (for clarity, drawing only two links in the topology).}}\label{fig:trio:debruijn}
    \vspace{-5pt}
\end{figure}

\autoref{fig:trio:debruijn} shows an example of such a hybrid \debruijn topology with eight ToRs using two \ss ports per ToR, and one \cs port.
Node IDs are in binary representation.
The topology is, therefore, a union of three matchings, two \ss and demand-oblivious, and one dynamic and \cs.
The matchings are shown in the right half of the figure, and we provide the main details below.
Formally, the \emph{static} \debruijn topology is defined as follows~\cite{leighton2014introduction}:

\begin{definition}[\debruijn topology]
For integers $b,d >1$, the {\em $b$-ary \debruijn graph of dimension $d$}, $DB(b,d)$, is a \emph{directed} graph $G=(V,E)$ with $n=\card{V}= b^d$ nodes and $m=\card{E}= b^{d+1}$ directed edges. The node set $V$ is defined as $V=\{v \in [b-1]^d \}$, 
i.e.,  $v=(v_1,\ldots,v_d), v_i \in [b-1]$, 
and the directed edge set $E$ is:
  $ \{v,w\} \in E \Leftrightarrow w \in \{ (v_2,\ldots,v_{d},x): \; x \in [b-1] \}$
where $[i]=\{0, 1, \dots, i\}$.
\end{definition}

The direct neighbors of a node $v$ are determined by a left \emph{shift} operation on the node's address and appending a new symbol $y\in\left[b-1\right]$.
For example, in \autoref{fig:trio:debruijn} where $b=2$ is the binary case, shifting $001$ and appending $0$ to it results in $010$, which is one of two the neighbors of node $001$ in the static sub-topology.
In particular, it is the neighbor via port $0$. The second neighbor is $011$, via port 1, which indicates appending $1$ after the shift operation.
A key property of \debruijn graph-based topologies is that they can be constructed from $b$ directed perfect matchings. Moreover, when using longest prefix matching (LPM), the size of the forwarding table on each node has at most $bd = O(b\log_b n)$ entries~\cite{zerwas2023duo}. 

The essence of the recent proposal Duo~\cite{zerwas2023duo}, was to show that augmenting a \debruijn topology built from $k_s$ \ss matchings, with a $k_d$ \cs matchings creates a topology that supports integrated, multi-hop, greedy, work-conserving, LPM-based routing with a forwarding table size of $O((k_s+k_d) \log_{k_s} n)$ and diameter $d \le \log_{k_s} n$ (Theorem 3.2 within). Moreover, the update cost of a forwarding table upon a \cs link change is small and can be performed locally by communicating with the new neighbor.
\revisiontwo{Routing on the \debruijn topology can be implemented using standard IP and packet switching equipment. More details are given in Appendix~\ref{sec:appendix:debruijn} and~\cite{zerwas2023duo}.}
While Duo has many benefits that we implement in \system, it does not have a \rs sub-topology which we discuss next.

\subsection{Packet Scheduler for \rs Topology}\label{ssec:scheduler}
As in previous work~\cite{sirius, rotornet, opera}, the highly dynamic \rs sub-topology requires a non-work-conserving packet scheduling algorithm to transmit packets successfully. Recall that in this sub-topology, links between ToRs are constantly and systematically changing, emulating a complete graph between the ToRs while being oblivious to the demand.
Therefore, packets potentially need to be stored while waiting for their next-hop link to be reconfigured. 
Like previous work, \system buffers packets at the end hosts (and not at the ToR switch) and forwards them to a ToR switch based on a ToR-host synchronization protocol and the packet scheduler.

The state-of-the-art approach for this challenging task is to consider both \emph{direct} (single hop) and \emph{indirect} (of at most two-hops) routes and use Valiant-based routing~\cite{valiant1982scheme} (via a random intermediate helper node) to achieve load balancing.\footnote{For short, time-sensitive packets, Opera~\cite{opera} uses longer multi-hop paths routing. Still, these packets capture only a small fraction of the total traffic.}
This approach was shown to provide high throughput, i.e., the RotorLB (RLB) scheduling proposed with RotorNet and Opera~\cite{sirius,opera}. RLB has for each host a set of \emph{virtual buffers} for each destination, both for \emph{local} traffic that is generated by the host and for \emph{non-local} traffic where the host acts as an intermediate node. 
Unfortunately, RLB introduces a significant control plane overhead, particularly since intermediate buffers cannot easily handle overflows, and a tight sender-receiver flow control mechanism needs to be implemented.  
At the beginning of every slot, hosts (in different racks) negotiate the amount of traffic that can be sent indirectly to prevent overflow.
In contrast, \system implements a simpler packet scheduler, \emph{LocalLB} (LLB), that does not require such global synchronization. The decisions of what to send indirectly are made (rack-)locally.
In case of an overload of a specific rack/host, i.e., if non-local traffic accumulates, \system can rely on its efficient backbone topology to do offloading and forward long waiting traffic to its destination.
A second difference is the reserved capacity per source. 
\opera assumes a uniform distribution of the traffic and, therefore, a priori, applies an equal share of the slot capacity across all (source) hosts in a rack. 
The initial sending capacity assigned to each host is $\frac{C}{h}$, where $C$ is the slot capacity and $h$ is the number of hosts.
\system does not make such an assumption but allows a more flexible distribution of the resources.

\begin{table}[t]
    \centering
    \footnotesize
    \renewcommand{\arraystretch}{1.05}
    \begin{tabular}{c|c|c|c|c}
         & Offloading & Non-local & Local & Indirect (Sync.) \\\hline\hline
         RotorLB (RLB) & Not supported & FS & FS & FS (global) \\\hline
         LocalLB (LLB) & Non-local & FS & FS & Greedy (local) \\
    \end{tabular}
    \caption{Comparison of RotorLB to \system's LocalLB.}
    \label{fig:trio:scheduling}
    \vspace{-0.7cm}
\end{table}

\autoref{fig:trio:scheduling} summarizes the scheduling process at the beginning of a slot for RLB vs. LLB.
First, with LLB each host checks if non-local traffic demand must be offloaded from the \rs to the backbone sub-topology.
The offloading (\autoref{alg:trio_offloading} in appendix) follows a simple, greedy logic. \revision{Each host $v$ checks for each destination $u$ (that is not in the same rack) if the local demand exceeds a given threshold $d_{\mathrm{off}}$. 
If so, all the non-local traffic to $u$ is offloaded to the \debruijn-based backbone.
If the local demand to $u$ is below the threshold, the algorithm compares the total demand from $v$ to $u$ and offloads excess \emph{non-local} demand to the backbone.}\
Note that LLB offloads only non-local demand to the backbone.

Next, the remaining non-local demand is scheduled. The procedure is the same for RLB and LLB using a fair-share (FS) algorithm over source and destination hosts.
The same procedure is applied for local direct demand.
After this step, RLB creates offers to be exchanged between connected ToRs, thereby synchronizing demand information across the network to perform flow control. In LLB, this step is not necessary, saving complexity.
For allocating new indirect traffic, RLB uses again FS to distribute the remaining capacity across the source hosts and, in particular, the destinations per host.
In contrast, LLB follows a more greedy share (GS) approach since it can handle buffer overloads (see \autoref{alg:trio_scheduling} in appendix).
\revisiontwo{Each host selects the destination with the largest demand and then distributes the remaining capacity in the slot to all hosts with remaining local demand again in a greedy way preferring source destination pairs with small leftover demand and capacity. The detailed algorithm is given in the appendix.}

\subsection{Technical Details}\label{ssec:transport}
\system's architecture with three sub-topologies comes with several challenges regarding integration and synchronization.
In the following, we first cover flow classification, feasible transport protocols, and data and control plane components. Afterward, we describe how packets are forwarded, the synchronization between hosts and ToRs, \revision{and sketch the procedure to re-assign ports to a different scheduler.}

\para{Flow classification}
\system requires a mechanism to classify flows and to separate traffic onto the three sub-topologies. One option is to estimate the flows' sizes to classify them as done in \opera~\cite{opera}, \textsc{Cerberus}~\cite{sigmetrics22cerberus}, and \duo~\cite{zerwas2023duo}. 
Also, \system applies this method to distinguish \ss and \cs traffic, for instance, using approaches such as flow aging or based on information from the application.
However, the \rs sub-topology is particularly suited for uniform traffic patterns, such as flows that belong to the shuffle phase of a map-reduce job.
Therefore, in addition, \system relies on application-level information for traffic classification. 
That is, either a shim layer between the application and network stack on the host inserts tags that identify \rs traffic, or alternatively, transport layer ports are used to identify the applications with uniform communication patterns.

\begin{figure}
    \centering
    \includegraphics[width=\linewidth]{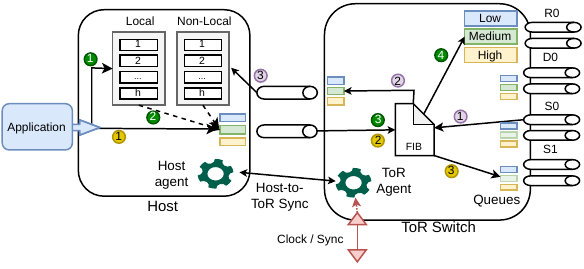}
    \vspace{-0.8cm}
    \caption{Data- and control plane for hosts and ToRs.}
    \label{fig:trio:components}
\end{figure}

\para{Transport protocols}
\system uses different transport protocols for the three sub-topologies.
Latency-sensitive (small) flows run via the \ss topology with NDP~\cite{ndp} which has shown good performance for such traffic.
Flows that are transmitted via the \rs part are sent according to the schedules determined by LocalLB (LLB, \autoref{ssec:scheduler}).
Throughput-sensitive flows use standard TCP for transmission on the \ss and \cs ports to efficiently share resources.

\para{Data and Control Plane Components}
\system uses label-based source routing and priority queuing to forward the flows onto the different topology classes and to turn the decisions of LLB into action. 
\autoref{fig:trio:components} visualizes the involved components on the hosts and the ToRs.
The example considers a \system configuration with one \rs (R0), one \cs (D0), and two \ss ports (S0, S1) on the ToR switch.
All three flow classes share the up-link from host to ToR. 
In order to reduce interference, \system uses priority queues on both hosts and ToR switches. 
\ss flows are given the highest priority, followed by \rs and \cs.
Note that queues for \ss and \cs traffic are not needed on \rs uplink ports, but the queue for \rs traffic is needed on both \ss and \cs ports to enable the \emph{offload} from \rs topology to the \debruijn backbone.
Besides the port queues, there are four more entities involved:
\begin{itemize}[leftmargin=*]
    \item The \emph{Local} and \emph{Non-local} buffers store packets for transmission via the \rs part. Similar to RotorNet and Opera, each of them features a dedicated virtual queue per destination in the network~\cite{rotornet,opera}.
    
    \item \texttt{Host agent} runs on every host and sends the individual
    sizes of the \emph{Local} and \emph{Non-local} buffers to a rack-local coordinator (\texttt{ToR agent}). 
    \revisiontwo{It waits for \emph{pull} messages from the \texttt{ToR agent} containing the volume to be sent from the buffer per destination.}
    When sending a packet, the host adds a \emph{slot label} which either indicates the active \rs matching (to send via the \rs topology) or is $0$ if the packet is to be sent via \ss or \cs ports. (Every \rs port cycles through $n-1$ configurations. 
    The slot label indicates the active configuration for all hosts.)
    \item \texttt{ToR agent} coordinates the \rs packet transmission of all hosts in a rack. It can run on the ToR or on one of the hosts. It receives demand information from the \texttt{Host agents} and runs LLB.
    The outcome is sent to the hosts along with the slot label to be used (\emph{pull} messages). 
    \item \texttt{FIB} is a single forwarding table on the ToR for all three sub-topologies. Besides the destination IP address, it matches the slot label to obtain the egress port. Thereby, forwarding to \rs or \cs and \ss can be differentiated. The entries for the \rs links can be pre-computed and do not change over time (unless the number of \rs ports changes). Forwarding entries for \cs and \ss are updated when local links change.
\end{itemize}

\noindent\textit{Step-by-Step Forwarding Example:}
Packets from applications that should be sent via the \rs (Green), are put to the corresponding \emph{Local} destination queue \greencircle{1}.
If packets are offloaded from \rs to the \ss and \cs sub-topologies, the \texttt{Host agent} takes packets from the queue and directly sends them \greencircle{2}. In this case, the $0$ slot label is added to the packet.
Otherwise, the packets are sent according to the calculated schedule (cf. \autoref{ssec:scheduler}) with the slot label as provided by the \texttt{ToR agent}.
Packets that are sent indirectly via the \rs part are encapsulated with the address of the intermediate host.
On the ToR, the packet is matched in the FIB \greencircle{3} and forwarded using the medium priority queue \greencircle{4}.

Packets from applications to be sent via \ss or \cs (Non-Rotor), are tagged with $0$ as slot label and then sent out to the ToR \yellowcircle{1}. Here, they are again matched in the FIB \yellowcircle{2} and forwarded accordingly \yellowcircle{3}.
Traffic received on the ToR's uplinks is forwarded to the destination host \purplecircle{1} \purplecircle{2}.
When a host receives indirect \rs traffic, it adds the packets to the corresponding \emph{non-local} queue \purplecircle{3}.
Packets received on the final destination (host) are forwarded to the application.
On the ToR, all packets are matched in the FIB, the slot label is popped, and the packet is sent to the egress port.

\para{\rs Synchronization}
\system requires two aspects of synchronization for the \rs sub-topology. First, ToRs need to synchronize their configuration state (slot) globally across the network. This can be achieved using a global (broadcast) clock signal (red triangle in \autoref{fig:trio:components}).

Second, \system needs synchronization between hosts and ToRs to put the decisions of LLB into effect.
Therefore, \system considers a similar approach as RotorNet and Opera.
Packets are primarily buffered on the hosts.
Demand information is pushed \emph{once} per slot from the \texttt{Host agent} to the \texttt{ToR agent}, e.g., with RDMA mesages~\cite{rotornet}. 
After having calculated the number of packets to send with LLB, the \texttt{ToR agent} pulls traffic from the host, i.e., notifies the \texttt{Host agent} about how much to send (e.g., again using RDMA messages).
Other approaches might be possible as well.
For instance, in its rack-based deployment, Sirius can implement the needed local queues with the buffers on the ToRs. 
The authors refer to Credit-based flow control mechanisms as available in InfiniBand to avoid buffer explosions on the ToR. 
However, Sirius negotiates the scheduling almost on a packet-by-packet basis. While this approach reduces the buffer requirements, it comes at the cost of high inter-ToR synchronization effort.

\para{Dynamic Port Partitioning}
\system can dynamically assign ports to different schedulers to adjust the proportions of the sub-topologies.
Like a ``normal'' reconfiguration of a \cs link, the reconfiguration involves changes in the \texttt{FIB}, which can be computed locally by the \texttt{ToR agent} or even prepared offline.
In addition, the \rs scheduler has to reload the new schedule for the sending lasers, which can also be pre-calculated offline.
The reconfigurations must be coordinated globally, e.g., with a centralized controller. 
All ToRs must change at the same time and, to minimize the impact on the \rs packet scheduling, the reconfigurations happen at the end of a \rs slot.
\system considers two possible types of port-to-scheduler reassignments: (1) \cs to \rs  and (2) \rs to \cs.
In (1), the \texttt{ToR agent} first removes the rules with the respective port from the \texttt{FIB} of the ToR and clears the queues belonging to that port (similar to a \cs link reconfiguration).
Then, it removes the port from the \cs scheduler and assigns it to the \rs one by loading the new \rs matchings, updating sending laser schedule and the entries in the \texttt{FIB}.
Finally, it presents the new matchings to the packet-scheduler (LLB).
No updates to the host agent are required since all necessary information for it to operate is included in the pull messages.
Changing a port from \rs to \cs, happens analogously in reversed order.

\para{Practicality and Cost of \system}
Although \system comes with several challenges regarding the integration of the three sub-topologies, all the described aspects in the previous sections illustrate the feasibility of \system. 
Since the Sirius prototype (and simulation code) are not publicly available, we base our observations on a careful study of the paper \cite{sirius} and personal communication with some of the authors. 
We claim that the principles of \system can be implemented and prototyped using the architecture of Sirius. This includes several main components we already discussed: link scheduling (topology reconfiguration) via tunable lasers on both sender and receiver, IP packet forwarding and multi-hop routing, and time synchronization. The control plane can be implemented using \emph{local} rack-based agents and a \emph{global} SDN-based controller that uses the static topology of \system for control messages. As we mentioned earlier, a control plane for the demand-oblivious links (ports) already exists in Sirius, and a control plane for demand-aware links (ports) was already implemented, e.g., in ProjecToR \cite{projector}.  
Moreover, we claim that since \system uses the same hardware as Sirius, the cost analyses of Sirius still hold.
In \cite{sirius}, it was shown that Sirius is cost-effective compared to a classical electrical-based switched network, i.e., Sirius can reach the same throughput at a lower cost. In turn, this claim follows for \system as well since, as we show next, \system improves the throughput of Sirius-like systems, namely demand-oblivious, rotor-based RDCNs.

\nspace
\section{Empirical Evaluation of \system}\label{sec:evaluation}
We evaluate \system with packet-level simulations using \emph{htsim}~\cite{ndp}.
We compare it to two state-of-the-art RDCNs, \revision{a Sirius-like, pure demand-oblivious system, } \opera \cite{opera}, and \revision{ a demand-aware system, } \duo \cite{zerwas2023duo}, across a range of traffic patterns that build on available empirical flowsize distributions and synthetic patterns.
The evaluation focuses on throughput as the main performance metric but also investigates flow completion times and the efficiency of the resource usage.

\para{Settings, Topologies \& Traffic}\label{subsec:Settings}
We consider topologies with $n=64$ ToRs. 
Each ToR has $16$ bi-directional ports, which are equally split into up- and downlinks ($k=8$).
All links have a capacity of $10$~Gbps. 
This results in a total uplink capacity of $5.12$~Tbps, which is comparable to~\cite{opera,sirius,zerwas2023duo}.

\autoref{tab:eval:parameters} (in Appendix) summarizes the used reconfiguration periods. The values are chosen such that the duty cycles in both dynamic topology parts are similar to prior work~\cite{opera,zerwas2023duo}. Unless stated otherwise, the physical reconfiguration delay is $100$~ns for \rs switches and $1$~ms for \cs switches. On the \rs part, we use a $1.7\mu s$ guard period to empty the fibers before reconfiguration.
The evaluation compares the following three systems:

\begin{figure*}[t]
    \centering
    \subfigure[$L=35\%$.]{\label{fig:eval:tp:hm_load:35}\includegraphics[trim=0 0 1.82cm 0,clip,width=0.277\linewidth]{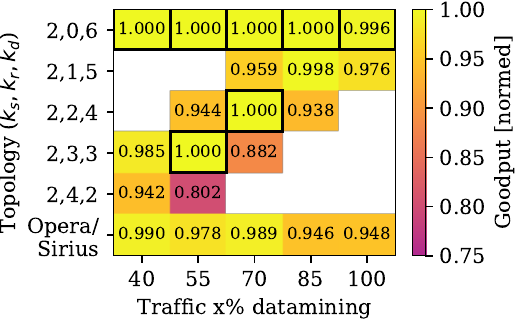}}\hfill
    \subfigure[$L=40\%$.]{\label{fig:eval:tp:hm_load:40}\includegraphics[trim=1.8cm 0 1.82cm 0,clip,width=0.206\linewidth]{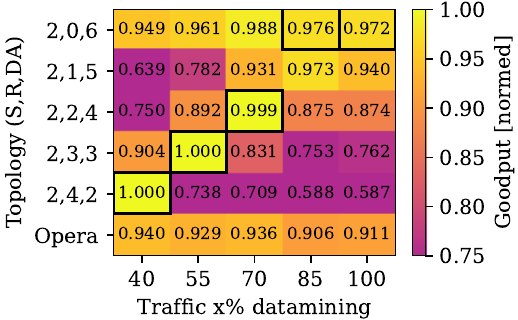}}\hfill
    \subfigure[$L=45\%$.]{\label{fig:eval:tp:hm_load:45}\includegraphics[trim=1.8cm 0 1.82cm 0,clip,width=0.206\linewidth]{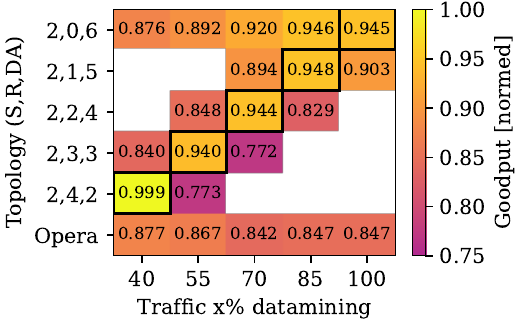}}\hfill
    \subfigure[$L=50\%$.]{\label{fig:eval:tp:hm_load:50}\includegraphics[trim=1.8cm 0 0 0,clip,width=0.277\linewidth]{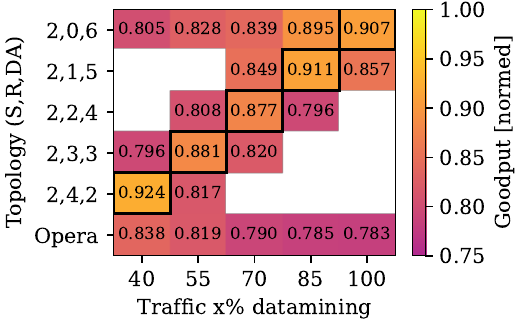}}
    \vspace{-16pt}
    \caption{The average goodput over $1$s of simulation. The heatmap compares topology configurations ($Y$-axis) and traffic shares ($X$-axis) for different loads $L$. Values are normalized to the offered traffic. The best result is highlighted with a thick border. 
    The white regions are excluded for clarity, they are far from optimal.}
    \label{fig:eval:tp:hm_load}
    \vspace{-0.4cm}
\end{figure*}

\para{D3} uses the hybrid topology as presented in \autoref{sec:system}. Throughout the evaluation, we vary the proportions of the parts denoted by the tuple $(k_s,k_r,k_d)$. 
\revisiontwo{Scheduling of \cs links follows the approach for Duo (see below).}
We assume that the system has full knowledge of the flows' sizes at any point in time; an assumption also made in prior work~\cite{opera,zerwas2023duo}.
The allocation of flows to the scheduling classes relies on two criteria. First, we assume application-level information to identify traffic for the \rs part using RLB (see also the traffic description later). 
Second, a size threshold of $1$~MB separates traffic for only the \ss part (using NDP) and traffic for the \cs part (using TCP). 
Lastly, the offloading parameter is $d_{\mathrm{off}}=1$ packet, which is the volume that can be sent in a single slot and a single path, assuming all-to-all traffic and fair-share.

\para{Duo~\cite{zerwas2023duo}} has been presented in prior work, but it can be seen as a specific configuration of \system with $k_r=0$ (no \rs part). 
We consider a configuration with $k_s=2$ and $k_d=6$. The \cs links are scheduled greedily based on the remaining demand volume between the ToR pairs according to Algorithm 2 in~\cite{zerwas2023duo} with a threshold of $10$~MB. The algorithm has full knowledge of the flows' sizes upon their arrival.
The routing uses the properties of \debruijn-based greedy routing, i.e., paths combining \ss and \cs links are possible.
Similar to \system, flows $<1$~MB use NDP and larger flows use TCP.
Queues in the two systems above can hold $50$ data packets of size $1500$~B.

\para{Opera/Sirius~\cite{opera}} \revisiontwo{is a demand-oblivious, dynamic topology and serves as the second baseline. Since an implementation of Sirius is not available, we use Opera as the closest representative of a Sirius-like system.
It periodically cycles through a specifically-generated set of matchings that maintains an expander graph at every time instance. }
\opera also splits the flows in low-latency and bulk traffic. Low-latency traffic is forwarded via the temporarily static expander part of the topology with NDP, whereas bulk traffic is scheduled and sent with the RotorLB protocol and scheduling~\cite{opera,rotornet}.
The evaluation uses their default configuration for queue sizes ($8\cdot 1500$~B) and the threshold for bulk traffic is $15$~MB.

\para{Traffic} We consider an online traffic scenario where flows arrive over time according to a Poisson process.
To demonstrate that \system is particularly suited for traffic patterns that mix skewed demands with uniform patterns, we create traces that contain traffic from two distributions and vary the shares of the patterns. We denote as $x$ the \emph{share} of the skewed traffic.
For the skewed part, connection pairs are sampled uniformly at random.
If not stated otherwise, the flow sizes are sampled from available empirical distributions \datamining~\cite{greenberg2009vl2}.
For the uniform pattern, we create a demand matrix with one flow of size $112.5$~KB for each host pair in different racks. The flows of a matrix arrive over time.
The load $L$ is controlled via the arrival rates of the flows.
We assume that all three systems can identify traffic belonging to the uniform traffic pattern using application-level information and/or special tags. 
They are configured in such a way that they forward flows belonging to the uniform traffic via the \rs part.
We report on $10$ runs per setting.

We first present results about throughput, then about individual flow performance and finally about dynamic traffic.

\subsection{Throughput Evaluation}
We start with evaluating the throughput of \system.

\para{D3 Increases Throughput}
\autoref{fig:eval:tp:hm_load} visualizes the average goodput of $1$~s of simulation for different topology configurations and traffic mixes. 
The values are normalized to the offered load in the respective run. 
A value $=1$ means that the topology can fully serve the offered traffic.
The skewed traffic is sampled from \datamining.
Comparing the different loads (sub-figures), we note that the normalized values decrease slightly with increasing load. 
For $L=35\%$ (\autoref{fig:eval:tp:hm_load:35}), the number of cells $=1$ is highest; whereas for $L=50\%$ (\autoref{fig:eval:tp:hm_load:50}), none of the configurations can fully sustain the offered traffic.
This is expected as the congestion in the topologies increases.

Looking at the individual heatmaps (e.g., \autoref{fig:eval:tp:hm_load:40}), the performance of the topology configurations varies with the traffic share between \datamining and uniform.
That is, for each share, there is one best configuration.
For instance, for $x=70\%$ ($70\%$ \datamining and $30\%$ uniform traffic), \system with $(2,2,4)$ achieves the highest goodput, i.e., the normalized values are closest to $1$.
Moreover, topology configurations with a small number of \rs links serve better traffic mixes with higher $x$.
This aligns with the conclusions made in prior work~\cite{sigmetrics22cerberus}.
The share of \rs links approximately corresponds to the share of uniform-size flows in the traffic.
For instance, the configuration $(2,2,4)$ dedicates $75\%$ of the resources to \ss and \cs and performs best for $x=70\%$ in which $70\%$ of the traffic uses these sub-topologies.
Here, the normalized goodput is $0.999$, compared to $0.892$ and $0.875$ for $x=55\%$ and $x=85\%$ respectively.
To conclude, over-provisioning the \rs part reduces the goodput more than over-provisioning the \cs one.

Compared to \system, \opera performs worse for all considered traffic mixes (and load levels).
For instance, for $L=40\%$, \opera achieves approximately $6\%$ lower goodput than \system.
\duo, which resembles a special case of \system (with $(2,0,6)$), performs the best for the high $x$.
Overall, choosing the correct configuration for \system is important for maximizing its performance.
If not stated otherwise, the following analyses use $L=40\%$ and $x=70\%$ sampled from \datamining.

\para{LocalLB Throughput \& Complexity}
In the previous analyses, \system uses the LocalLB proposed in \autoref{ssec:scheduler}. 
This scheduling can run rack-local, which reduces the synchronization overhead compared to running \system with the RotorLB scheduling~\cite[Algorithm 1]{rotornet}.
\autoref{fig:eval:tp:bar_sched} shows the goodput and the normalized simulation runtime (wall clock time) of both scheduling approaches within \system for $L=40\%$. We acknowledge that the simulation runtime does not reflect the actual synchronization effort, which also depends on hardware characteristics but can give a first indication.
The figure compares two cases: (a) with the topology matching the traffic share, i.e., (2,1,5) is run with $x=85\%$, (2,2,4) with $x=70\%$ etc., and (b) with $x=70\%$.
For all configurations, the differences in goodput between \system and RotorLB are $<10\%$.
For the matching traffic, we observe that LocalLB achieves slightly higher goodput than RotorLB and for the fixed share, the opposite is the case.
LocalLB consistently has lower run times than RotorLB. The difference increases with the number of \rs ports and emphasizes the gain of avoiding inter-ToR synchronization for indirect traffic.

\begin{figure}[t]
    \centering
    \subfigure[Best topology configuration.]{\includegraphics[width=0.49\linewidth]{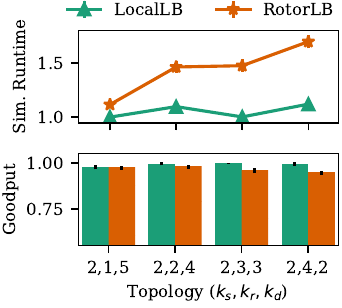}}\hfill
    \subfigure[$70\%$ \datamining.]{\includegraphics[width=0.49\linewidth]{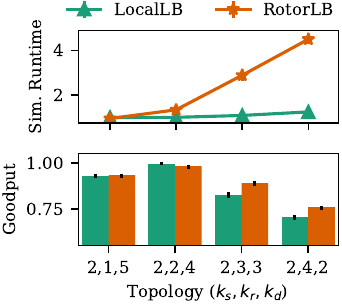}}
    \vspace{-16pt}
    \caption{Comparison of \system's scheduling (LocalLB) and RotorLB scheduling for load $40\%$. }
    \label{fig:eval:tp:bar_sched}
    \vspace{-.6cm}
\end{figure}

\subsection{Sensitivity Analysis}
To demonstrate that \system's advantages also persist in other scenarios, we evaluate its sensitivity to system parameters and traffic characteristics. in the following 

\subsubsection{System parameters}
\label{sec:appendix:sens-param}
We start with assessing the impact of reconfiguration times of the \cs topology, higher link capacities, and varying offloading from \rs to the \cs topology.

\begin{figure*}[t]
    \centering
    \begin{minipage}[t]{0.32\textwidth}
       \includegraphics[width=\linewidth]{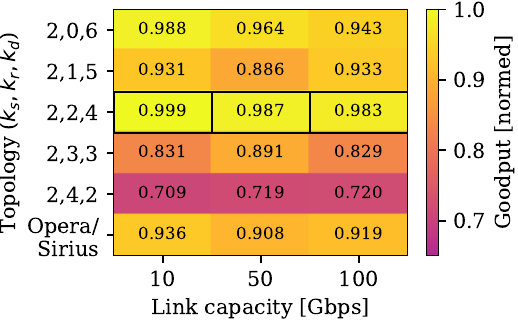}
    	\caption{\revision{Heatmap of the normalized achieved goodput against the link capacity,  and topology configuration.}}
    	\label{fig:eval:tp:hm:link_capa}
    \end{minipage}\hfill
    \begin{minipage}[t]{0.32\textwidth}
        \includegraphics[width=\linewidth]{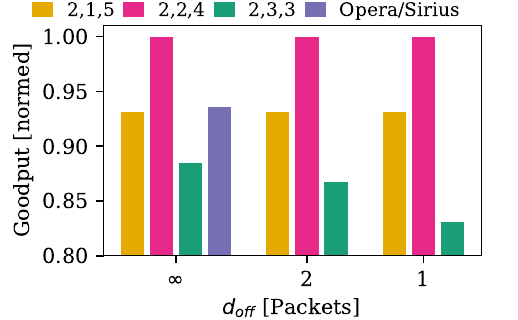}
    	\caption{\revision{Bar of the normalized achieved goodput against the LocalLB offloading threshold ($d_{\mathrm{off}}$) and topology configuration.}}
    	\label{fig:eval:tp:offloading_threshold}
    \end{minipage}\hfill
    \begin{minipage}[t]{0.32\textwidth}
   		\includegraphics[width=\linewidth]{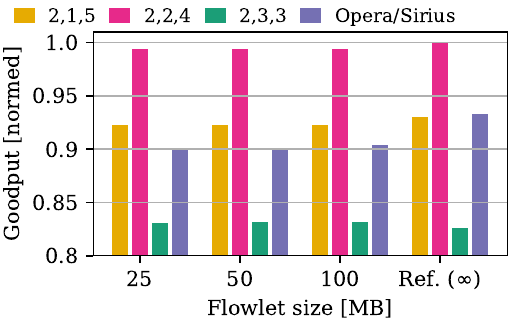}
    	\caption{\revision{Barplots of the mean goodput for traffic with large flows split into chunks  (flowlets).}} \label{fig:eval:burst:fixed_iatf}
    	\label{fig:eval:burst}
    \end{minipage}
\end{figure*}

\begin{figure}[t]
    \centering
    \subfigure[$R_d=10$~ms]{\label{fig:eval:tp:hm_reconf:10ms}\includegraphics[trim=0 0 1.83cm 0,clip,width=0.48\linewidth]{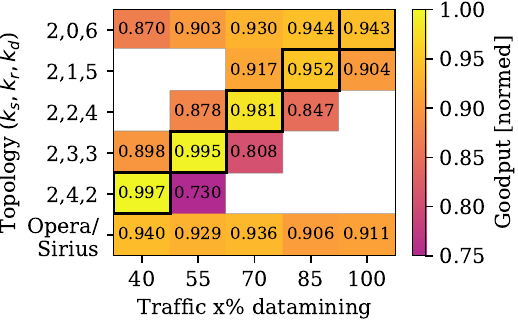}}\hfill
    \subfigure[$R_d=100\mu$s]{\label{fig:eval:tp:hm_reconf:100us}\includegraphics[trim=1.8cm 0 0cm 0,clip,width=0.48\linewidth]{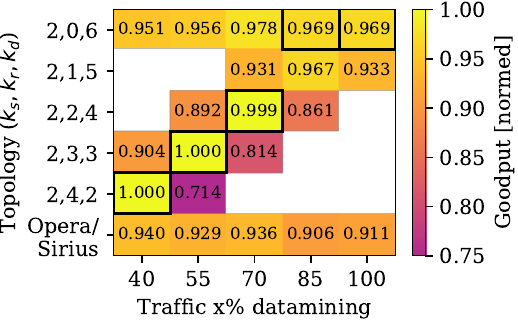}}
    \caption{Goodput averaged over $1$s of simulation. Comparison of reconfiguration times of the \cs links with fixed duty cycle. The heatmap compares topology configurations and traffic mixes.}
    \label{fig:eval:tp:hm_reconf}
\end{figure}

\para{Reconfiguration Times}
\autoref{fig:eval:tp:hm_reconf} illustrates similar heatmaps as in \autoref{fig:eval:tp:hm_load} for $R_d=10$~ms (\autoref{fig:eval:tp:hm_reconf:10ms}) and $R_d=100\mu s$ (\autoref{fig:eval:tp:hm_reconf:100us}). 
The reconfiguration period is adjusted so that the duty cycle remains constant.
The load is $L=40\%$.
On a macroscopic level, we observe a similar behavior as before: matching the topology configuration to the traffic mix maximizes the achieved goodput.
Comparing the values in detail, no consistent impact of the reduction of $R_d$ (and the implied reduction of the reconfiguration period) is observable. Some combinations have higher, some have lower achieved goodput. 
\revision{Note that the values for \opera are not affected when changing $R_d$.}

\para{Link Capacity \& Offloading}
To illustrate that \system's benefits also persist for higher link capacities, \autoref{fig:eval:tp:hm:link_capa} shows the achieved goodput for scenarios with higher link capacity. The detailed parameters are summarized in Tables~\ref{tab:eval:parameters_100G} and~\ref{tab:eval:parameters_50G}. The load is $L=40\%$ with the share of \datamining$x=70\%$.

Overall, we can observe a similar behavior as for 10G. The matching topology achieves the highest throughput.
The performance deviations of \system between 10G and 100G are within the range of single-digit percentages, but there is no consistent pattern.
50G shows slightly different behavior than 10G and 100G for the configurations close to the matching one. Here, the goodput reduces almost symmetrically for too few or too many \rs links, whereas it is asymmetrically for 10G and 100G. We leave a more detailed investigation for future work.
In summary, we conclude that the gains of \system persist also for other link capacities.

\autoref{fig:eval:tp:offloading_threshold} visualizes the impact of the offloading threshold. 
The link capacity is 10G. 
The figure compares three values 1, 2, and $\infty$ and also shows \opera as a reference. The values are normalized to the best configuration with $d_{\mathrm{off}}=1$ (2,2,4).
The achieved goodput varies with the threshold. However, the ranking among the topology configurations persists.
The specific impact depends on the topology configuration. For instance for (2,2,4), $d_{\mathrm{off}}=2$ results in slightly lower goodput than $\infty$ or $1$. In contrast, for (2,1,5), the impact is negligible and for (2,3,3), we observe that the goodput decreases significantly as the threshold is reduced and more traffic is offloaded.
Overall, the amount of offloading should be optimized.

\subsubsection{Traffic pattern}
\label{sec:appendix:sens-traffic}
In the following, we assess if \system outperforms the reference solutions also with other loads, flow size distributions, less bursty traffic and evaluate its robustness to delayed or erroneous flow size information.

\begin{figure*}[t]
    \centering
    \subfigure[Load (Ref: \opera/Sirius).]{\label{fig:eval:gain:load}\includegraphics[width=0.3\linewidth]{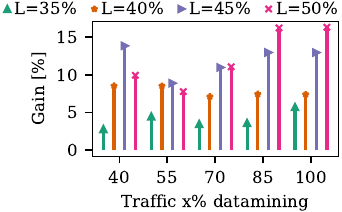}}\hfill
    \subfigure[Distribution (Ref: \duo).]{\label{fig:eval:gain:distr_duo}\includegraphics[width=0.3\linewidth]{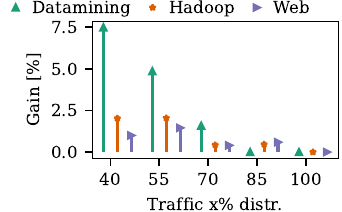}}\hfill
    \subfigure[Distribution (Ref: \opera/Sirius).]{\label{fig:eval:gain:distr_opera}\includegraphics[width=0.3\linewidth]{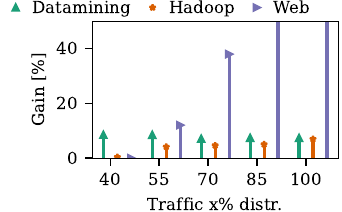}} 
    \vspace{-10pt}
    \caption{\revision{Comparison of gain in goodput when using \system for different flowsize distributions for the skewed traffic (a) and over the offered load (b \& c). The reference topology is indicated in the caption.
    }}
    \label{fig:eval:gain}
\end{figure*}

\para{Performance for Other Loads \& Distributions}
\autoref{fig:eval:gain} summarizes the relative gain in goodput when using \system compared to the indicated reference configuration. 
The gain is calculated as $\frac{G_{\system}-G_{ref}}{G_{ref}}$. Note that (2,0,6) (\duo) is also considered as part of \system here.
\autoref{fig:eval:gain:load} illustrates the gain against the share of skewed traffic for different loads.
For $L=35\%$, the gain varies between $0-4\%$ but does not show a clear trend.
For $40\%\leq L \leq 50\%$, the behavior is different. 
For these load values and $40\%\leq x\leq70\%$, the gain remains almost constant around $10\%$ -- neither the load nor the share of skewed traffic shows an impact here.

For a share $\geq 85\%$, the load clearly impacts the gain. It raises from $\approx 10\%$ at $L=40\%$ to $\approx 16\%$ at $L=50\%$. 
In summary, choosing the correct topology configuration becomes more critical with higher traffic skew and load.

\autoref{fig:eval:gain:distr_duo} and~\ref{fig:eval:gain:distr_opera} illustrate the impact of the flow size distribution of the skewed traffic on the gain.
In addition to \datamining, they also show the results when using \hadoop~\cite{facebook} and \websearch~\cite{alizadeh2010data}.
The reference configurations are \duo (2,0,6) and \opera, respectively.
First, for \duo (\autoref{fig:eval:gain:distr_duo}), we observe that the gain diminishes for all distributions when $x$ increases.
For higher shares, the optimal configuration has smaller $k_r$ and becomes more similar to (2,0,6). 
In fact, for $x=100\%$, (2,0,6) is the best configuration.

Compared to \opera (\autoref{fig:eval:gain:distr_opera}), \system shows improvements up to $10\%$ when skewed traffic is sampled from the \datamining or \hadoop distribution.
The numbers vary depending on the exact share.
For \websearch, the gain strongly grows with $x$ (the upper-end is cut. The max gain is $112\%$ at $100\%$ \websearch).
Also, in prior work~\cite{opera, zerwas2023duo}, \opera's throughput was shown to collapse if the amount of traffic routed via the expander part increases, as is the case with the \websearch distribution.
As a general trend, we observe that the gains decrease with the average flow size of the distributions (\datamining has the highest, followed by \hadoop and then \websearch). 
The intuition is that \hadoop and \websearch are per se less skewed than \datamining, which overall benefits \opera.
In summary, the gains from \system persist for other skewed traffic distributions.

\para{Impact of Reduced Burstiness}
In the previous evaluations, the entire volume of the flow is available for transmission upon flow arrival which results in highly bursty traffic when large flows arrive.
To assess the performance of \system in a less bursty scenario, we split large flows into several flowlets (of a given size) arriving sequentially with a certain delay. 
The flowlet inter arrival time is calculated from the ideal transmission time. The values are normalized to the best solution (2,2,4) in the reference case in which the flowlet size is $\infty$, i.e., the entire flow arrives at once.
\autoref{fig:eval:burst:fixed_iatf}  varies the size of the flowlets.
When matching the topology to the traffic almost no impact of the flowlet size is observervable. Also, the ranking between the topology configurations persists.
For the not-matched configurations (2,1,5) and (2,3,3), there are slight variations in the achieved goodput. 
Also for \opera, the goodput decreases with the flowlet size.

\begin{figure}[t]
    \centering
    \includegraphics[width=0.65\linewidth,trim=0cm 5cmcm 2.3cm 0cm,clip]{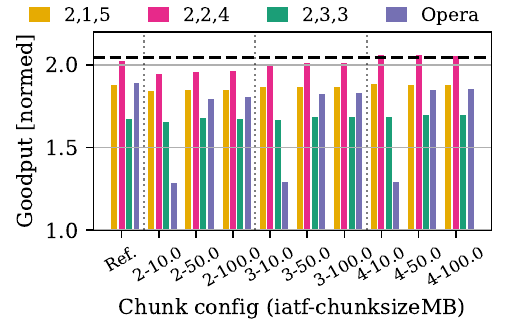}\vspace{-0.2cm}
    \subfigure[Fixed error $=0\%$.]{\label{fig:eval:flowsize_delay:delay}\includegraphics[width=0.49\linewidth]{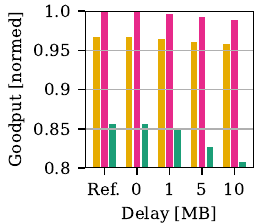}}
    \subfigure[Fixed delay $=1$MB.]{\label{fig:eval:flowsize_delay:error}\includegraphics[width=0.49\linewidth]{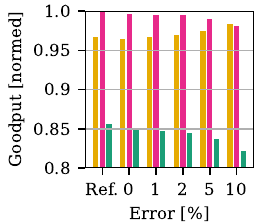}}\hfill
    \caption{\revision{Barplots of the mean throughput for traffic with relaxed assumption about flowsize information. (a) varies the delay of the information and (b) varies the error.}}
    \label{fig:eval:flowsize_delay}
\end{figure}

\para{Classification Errors and Delayed Information}\label{sec:appendix:flowsize}
\system builds on the assumption that it has information about the flow size readily available to classify flows. 
In order to demonstrate its robustness, we relax this assumption along two dimensions: 1) delaying information about the total flow size, 2) introducing errors to the flow classification.

For the first case, \autoref{fig:eval:flowsize_delay:delay}, we delay the flowsize information in the simulation so that the first part of the flow is handled as a small flow.
The full information flowsize is then available after the time it would take to transmit the first part over an ideal link. This can also be interpreted as some kind of flow aging. 
When the full flowsize information is released (emulated as a second flow arrival), the flow can be properly classified and forwarded on the fitting sub-topology.
The figure compares delays of $1$, $5$, and $10$MB which correspond to $667$, $3334$ and $6667$ packets respectively.

The figure shows the results without flow classification errors. Note that since the traffic generation for this analysis is not exactly the same as before, attention should be put on the relative differences and a 1:1 comparison of the numeric values to other figures is not meaningful.
Comparing the results to the reference case, the goodput reduces slightly as the flow size information delay increases. 
This is expected as more traffic is routed via the static topology part. However, the reduction amounts to only $2-3\%$ for $10$MB delay (which is already larger than the majority of the flows). 

The second relaxation (\autoref{fig:eval:flowsize_delay:error}),  is an additional error in the flow classification (on top of a fixed flowsize information delay of 1MB). 
Here, $y\%$ of the uniform flows for the Rotor-topology are mis-classified and forwarded via the static topology part. 
The goodput of the best topology configuration for $x=70\%$ (2,2,4) continuously reduces as the error increases. The reason is that the actual (allocated) traffic shares changes. The share of traffic allocated to the \ss and \cs topology increases.
Therefore, we also observe that (2,1,5), which dedicates more resources to those topology parts, shows increasing goodput, up to the point that the ranking changes at $10\%$ error.
Lastly, the goodput of (2,3,3) reduces for the very same reason as described above.

\nspace
\subsection{Individual Flows' Performance}
This subsection evaluates the impact on the individual flows.

\para{Flow Completion Times (FCT)}
\autoref{fig:eval:fct} visualizes the FCTs against the flow sizes for \system (2,2,4), \duo (2,0,6) and \opera. 
The flows are separated into three groups for the sake of readability. The dashed lines indicate the optimal transmission time accounting for queues and propagation time (using the calculations provided by~\cite{opera}).
For small flows ($\leq100$~KB, \autoref{fig:eval:fct:small}), we consider the $99\%$-ile to understand better how this latency-sensitive traffic is handled.
For all sizes, \system achieves the smallest FCT, followed by \opera and \duo.
Recall that small flows use only the \ss topology. 
Compared to \system, \duo's performance suffers from the uniform traffic that cannot be served well over the \cs links and creates congestion on the \ss links.

\begin{figure*}[t]
    \centering
    \subfigure[Small flows.]{\label{fig:eval:fct:small}\includegraphics[width=0.31\linewidth]{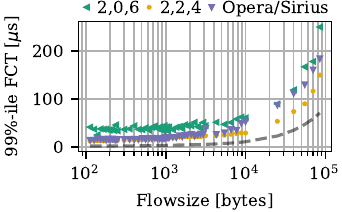}}\hfill
    \subfigure[Medium flows]{\label{fig:eval:fct:medium}\includegraphics[width=0.31\linewidth]{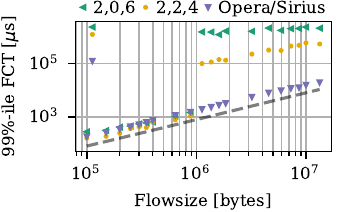}}\hfill
    \subfigure[Large flows.]{\label{fig:eval:fct:large}\includegraphics[width=0.31\linewidth]{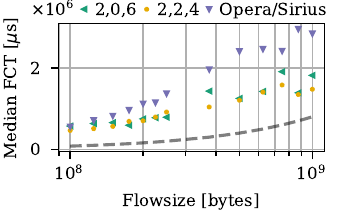}}\hfill
    \vspace{-13pt}
    \caption{Comparison of FCT. Flows separated into three groups. 99\%-ile for small flows (a), medium flows (b),
    and the median for large flows (c). Load is $40\%$, traffic mix with $x=70\%$.
    }
    \vspace{-.4cm}
    \label{fig:eval:fct}
\end{figure*}

For medium sized flows ($99\%$-ile, \autoref{fig:eval:fct:medium}), \system consistently performs better than \duo for the same reason as for the small flows. 
\system also outperforms \opera until the point where \system starts to classify flows for the \cs part ($1$~MB). From thereon, flows are using TCP, and there is a strong increase in the FCT, an effect that has already been observed in previous work, e.g., in \duo~\cite{zerwas2023duo} and similarly also for \opera~\cite{opera} (for flows $>15$~MB).
A special point of interest are flows of size $112.5$~KB, which arrive according to uniform matrices. 
All systems show an increased FCT here. \opera is the best, followed by \system (with the optimized configuration) and \duo (2,0,6).
This steep increase can be explained by the fact that these flows do not fit in a single \rs slot and need multiple cycles of the \rs sub-topology to be transmitted.

The situation changes for the FCTs of large flows ($\geq 100$~MB, \autoref{fig:eval:fct:large}). Since for large flows, throughput matters more, we consider here the median value. \system (2,2,4) performs best followed by \duo (2,0,6) and \opera. Specifically, the distance between \system and \duo is smaller compared to \opera; in fact, the FCTs for \duo and \system overlap for most flow sizes. This relates to our previous observations regarding the achieved goodput.
In summary, integrating \rs into a purely demand-aware topology improves the FCT for all flow sizes. Compared to \opera, \system trades off benefits for the small and large flows against those of medium-sized and the \rs flows.

\para{Packet Reordering}
To take a different perspective on the impact on the individual flows, \autoref{fig:eval:fct:seqno} illustrates the difference between the expected and received sequence number of the packets at the receive.
Thereby, it hints at the packet reordering experienced in the different topologies. 
Again, $L=40\%$ and $70\%$ \datamining.
The abscissae are divided into values $<0$ (useless re-transmissions), $=0$, and more fine-grained for values $>0$ (reception of re-ordered packets).
Comparing the systems, we observe that the behavior of (2,0,6) and \system (2,2,4) is similar.
For both, $>80\%$ of the packets arrive in order. This is fundamentally different for \opera for which $<20\%$ of the packets arrive in order. The shape of the remaining curve has a break around $79$ packets which corresponds to the number of packets needed to transmit the flows from the uniform traffic (of size $112.5$~kB).
We conclude that \opera requires more effort at the receivers to forward packets in order to the applications.

\nspace
\subsection{Dynamic Traffic \& Partitioning}\label{sec:eval:dynamic}
As a last aspect, we assess the benefits of dynamic partitioning in \system. \autoref{fig:dynamic:eval:tp} shows the goodput over time and the average of a simulation with varying traffic mix. Specifically, $x$ increases from $55\%$ for $<2$s to $70\%$ ($2$s $\leq t <4$s) to $85\%$ ($4$s $\leq t <6$s) and finally reduces to $70\%$ again for $\geq6$s.
For each of the three traffic mixes, the figure shows the goodput of the matching configuration (2,3,3), (2,2,4) and (2,1,5) respectively.
For these three static baselines, we see in the period with matching traffic that they achieve the offered load (after some transient phase) but drop in goodput when not matching the traffic. For instance, (2,3,3) significantly drops after $2$s.
In contrast, for the dynamic partitioning, the achieved goodput aligns with the offered load throughout the whole run. This also becomes evident from the average values. 
Note that the shown example is hard-coded and only demonstrates the benefits of applying the dynamic partitioning.

\autoref{fig:eval:dynamic:retransmissions} illustrates the rate of retransmissions captured at the senders throughout the simulation run with changing traffic mix.
The retransmissions are split by the transport protocol (TCP or NDP) and normalized with the total successfully received volume at the destination per $1$ms time window. 
By design, there are no retransmissions for RLB.
The abscissa of the figure is divided to zoom-in to the time ranges close to a change in the traffic pattern, showing $\pm 0.5$s around the change.
All topology configurations show retransmission rates in a similar range.
For NDP, the are no retransmissions.
Moreover, in all cases, we observe periodic spikes due to reconfigurations of the \cs links. During these reconfigurations, all packets in the affected queues are dropped, resulting in retransmissions.
The reconfiguration of the sub-topologies' sizes (at $t=2, 4, 6$s) uses the same approach when converting a \cs port to a \rs one and vice versa (cf. \autoref{ssec:transport}), hence, we observe the effects from \cs reconfiguration also during port conversion.
Besides that, no additional impact is observed hinting at the feasibility of port conversions in \system.
Note that the reconfigurations are scheduled such that they happen between two slots on the rotor topology; therefore, we also do not expect major impacts there.

\begin{figure*}[t]
\centering
\begin{minipage}[t]{0.3\textwidth}
        \includegraphics[width=0.98\linewidth]{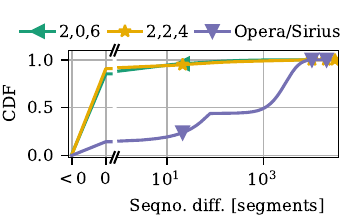}
        \caption{Difference in sequence number of received packets. $40\%$ load, $70\%$ \datamining.}
        \label{fig:eval:fct:seqno}
    \end{minipage}\hfill
    \begin{minipage}[t]{0.36\textwidth}
        \includegraphics[width=0.98\linewidth]{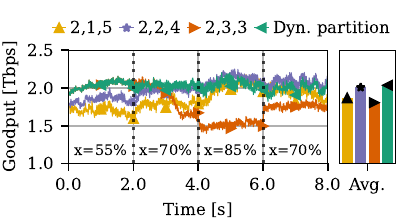}
        \caption{Goodput over time and average. Dashed lines indicate changes in the traffic mix.}
        \label{fig:dynamic:eval:tp}
    \end{minipage}\hfill
    \begin{minipage}[t]{0.3\textwidth}
         \includegraphics[width=0.98\linewidth]{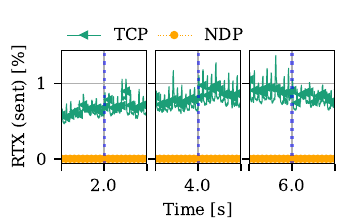}
        \caption{Retransmission rate  per transport protocol at the sender. relative to the received volume.}\label{fig:eval:dynamic:retransmissions}
    \end{minipage}
\end{figure*}

\section{Theoretical Modeling}\label{sec:analysis}

This section provides an analytical explanation for the superiority of \system over \opera/\sirius and \duo in terms of throughput. To this end, we study the \emph{demand completion times} (DCT) of these three systems in a simplified model that captures the essence of the systems. In~\cite{sigmetrics22cerberus}, a direct connection between the DCT, the time to send all the traffic of a demand matrix, and the throughput was proved; basically, it was shown that a shorter DCT implies a higher throughput. This led to a novel method that allows us to account for reconfiguration times when analyzing the throughput of dynamic networks. 

We compare the three systems denoted as i) \csn where all traffic needs to be sent via the \cs link scheduler (matchings) with a reconfiguration delay of $\crec$.
ii) \rsn where all traffic needs to be sent via the \rs link scheduler
that rotates between $n-1$ predefined (oblivious) matchings with a reconfiguration delay of $\rrec$.
iii) \msn a simplified \system-based system where traffic can be split (in time) between the \cs link scheduler and the \rs link scheduler.
We assume all systems have a single dynamic port in the ToRs, namely a single matching, but the topology scheduling can change over time.
Moreover, all systems send the same fraction of the traffic that belongs to time-sensitive flows toward \ss links for multi-hop routing and, therefore, remove it from our analysis.    

Following previous work~\cite{jyothi2016measuring,namyar2021throughput,eclipse}, we model the traffic as a \emph{saturated} demand matrix $M$, i.e., a scaled doubly stochastic matrix where the sum of each row and column is equal to the nodes capacity (depending on the links rates and the time-frame that the matrix captures).    
In our analysis of the DCT, we assume that delay is only due to transmission and reconfiguration times, and we neglect any delays due to packet loss or congestion. 
We now consider the DCT for \csn, \rsn, and the simplified \system in turn. 

\para{DCT of \csn and BvN Decomposition}
Similar to previous work~\cite{eclipse,chang2000birkhoff}, we model the DCT of \csn using the Birkhoff-von Neumann decomposition~\cite{birkhoff1946tres} and single-hop forwarding.
Briefly, the Birkhoff-von Neumann (BvN) theorem guarantees that for any doubly stochastic matrix, $M$, there exists a matrix decomposition to a set $P(M) =\{P_1,\dots,P_m\}$ of $m$ different permutation matrices (each row and each column has exactly a single 1 entry and all other entries are 0), each with a corresponding non-negative coefficient $ \alpha =\{\alpha_1,\dots\alpha_m\}$. The sum of the coefficients stratifies the condition $\sum^m_{i=1} \alpha_i=1$ when the matrix is doubly stochastic or $\card{M}/n$ when the matrix is saturated and $\card{M}$ it the sum of all entries in the $M$. 
The linear combination of these creates the original matrix that is: $\sum^m_{i=1} \alpha_i P_i=M$.
Given that the \cs link scheduler can emulate any matching, we can individually transmit each matching in $P$. 
Transmitting each $P_i$ takes $\frac{\alpha_i}{r}$ plus the reconfiguration time $\crec$. The DCT for \csn, a demand matrix $M$, an algorithm $\bvn$ that computes a decomposition $\mathcal{P}(M)$, and a set of coefficients $\alpha$, is the sum of transmission and reconfiguration times of each permutation: 
\nspace
\begin{align}\label{eq:daDCT}
    \dctda(M, \bvn) &=\sum_{i=1}^m (\frac{\alpha_i}{r}+\crec) =\frac{\card{M}}{nr}+m\crec.
\end{align} 
While it is simple to find a decomposition of an order of $O(n^2)$ terms, the problem of minimizing the number of terms in the decomposition is known to be NP-complete~\cite{dufosse2016notes}. Note that a minimal number of terms will also minimize the total reconfiguration time and, therefore, the DCT. 

\para{The DCT of \rsn}
Recall that \rsn uses faster reconfigurations to rotate between $n-1$ predefined matchings, which, when combined in a full cycle, form a complete graph~\cite{rotornet}. During each slot, there is a single active matching 
on which packets are sent (recall that for now, we have a single \rs link scheduler or port).
Each source-destination pair has to wait for all other $n-2$ matchings to rotate until it is available again.

In practice \rsn and similar projects such as Sirius~\cite{sirius} are demand oblivious and use Valiant load balancing~\cite{valiant1982scheme} routing on \emph{all} packets.
In this case, the fraction of direct single-hop traffic will always be $0$ \revision{and all traffic is sent via two hops.} 
Therefore, to approximate the expected DCT for \rsn in the following sections, we assume that it uses Valiant routing, and we denote it as $\val$.
That is, for a demand matrix $M$ its expected DCT on \rsn is 
\nspace
\begin{align}\label{eq:rsnDCT}
   \dctrot (M, \val) = \frac{2}{\eff r} \frac{\card{M}}{n}.  
\end{align}
\revision{where $\eff= \frac{\delta}{ \delta+\rrec}$ is the \emph{duty cycle}. This bound was previously informally ~\cite{rotornet,sirius} and formally ~\cite{sigmetrics22cerberus} claimed.}

\para{DCT of \msn and BvN Based Partition}
In the above, we provided bounds for two extreme cases: either the whole traffic, $M$ is sent via \csn using only \cs link scheduling, or the whole traffic is sent via \rsn using only \rs link scheduling.  

The basic idea of \system and here \msn, is that a partition of the traffic in $M$ into two parts may lead to better DCT results, namely to split $M$ into one sub-matrix that will be sent using \cs link scheduling and a second sub-matrix that will be sent using \rs link scheduling. 

We extend the classic BvN problem to the problem of DCT minimization on \msn. 
As before, let $M$ be a saturated demand matrix. 
We can use the BvN theorem to decompose $M$ into a set of at most $m \le n^2$ scaled permutation matrices $\{M_1, M_2 \dots M_{m}\}$ where $M_i = \alpha_i \cdot P_i$.
Since each $M_i$ is a scaled permutation matrix, the sum of any subset of matrices is a scaled double-stochastic matrix. 
An algorithm $\salg$ for \msn needs to perform three operations:
\begin{enumerate}[leftmargin=*]
    \item Decompose $M$ into the set $P$ of permutation matrices.
    
    \item Define the subset of matrices $P^{(da)} \subseteq P$ that will be served using \cs link scheduling. Let
    $P^{(rot)} = P \setminus P^{(da)}$ be the set that will be served using \rs link scheduling. We define $M^{(da)}=\sum_{P_i \in P^{(da)}} M_i$ and $M^{(rot)}=\sum_{P_i \in P^{(rot)}} M_i$. Note that
    $M = M^{(da)} + M^{(da)}$.

    \item Define the packet scheduling for $M^{(da)}$ using the \cs link scheduling and the packet scheduling for $M^{(rot)}$ using the \rs link scheduling.
\end{enumerate}
\noindent Given such an algorithm $\salg$, the DCT for \msn is
\begin{align}
    \dctmix(M, \salg) = \dctda(M^{(da)}) + \dctrot(M^{(rot)})
    \label{eq:dctmix}
\end{align}

An optimal algorithm for \msn is an algorithm that minimizes the overall DCT by finding the best decomposition and best partition for $M^{(da)}$ and $M^{(rot)}$. The above definition can be extended to non-saturated matrices (for $M, M^{(da)}$ and $M^{(rot)}$), but for simplicity we keep them saturated for now.  

\begin{figure}[t]
    \hspace{-15pt}
    \begin{tabular}{ccc}
    \multicolumn{3}{c}{\includegraphics[width=0.65\linewidth]{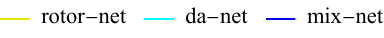}}  \\
    \includegraphics[width=0.3\linewidth]{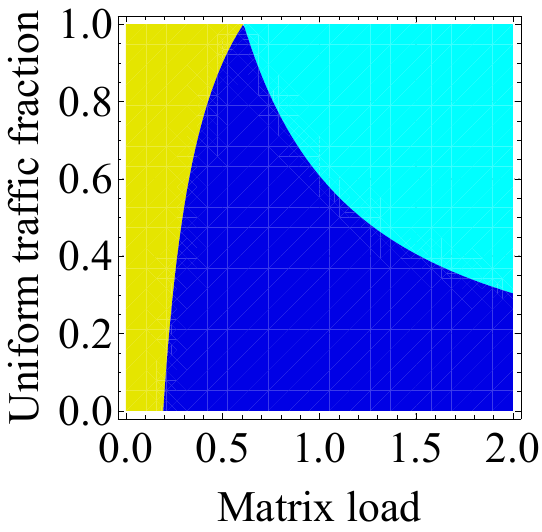} &
    \includegraphics[width=0.3\linewidth]{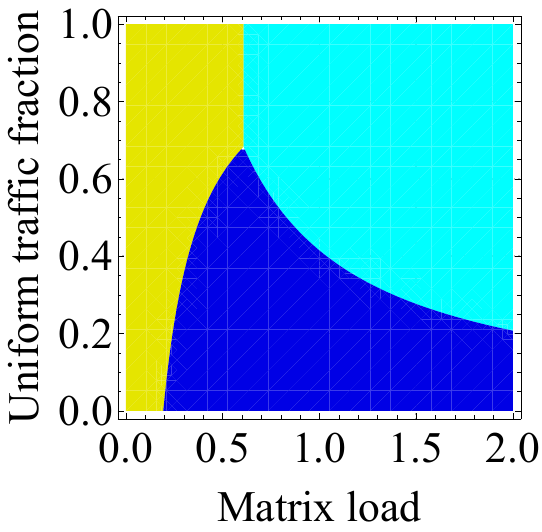} &
    \includegraphics[width=0.3\linewidth]{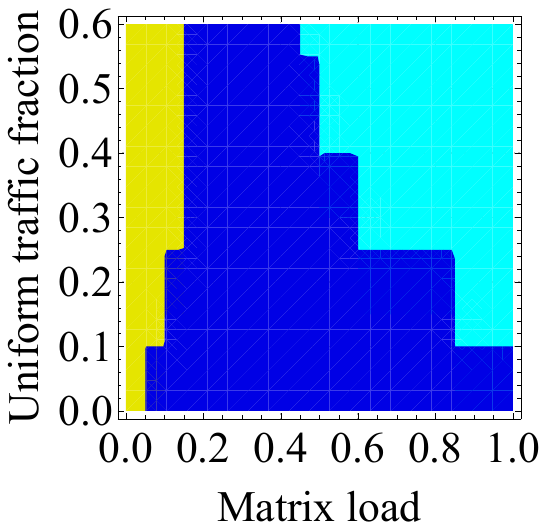} \\
     (a) $\bvnalg$ & (b) $\simalg$ & (c) \gmn 
    \end{tabular}
    \vspace{-11pt}
    \caption{Comparison of the DCT of \csn, \rsn, and \msn. (a) and (b) in the multi-permutation matrix $M(n,m,u,L)$ for $n=64$ and $m=20$, 
    (c) on simulation-based traffic.
    Regions denote the system with the lowest DCT.}  
     \label{fig:2d}
     \vspace{-11pt}
\end{figure}

\subsection{A Case Study}\label{sec:testcase} 
In this section, we consider a simple case study to evaluate the DCT of the three systems.
Let $M(n, m, u, L)$  be an $n \times n$ scaled doubly stochastic demand matrix where the sum of each row and column is equal to $L$. $M(n, m, u, L)$ is built from a uniform matrix $M^u$ with $\card{M^u}=uLn$ and a multi-permutation matrix $M^m$ with 
$\card{M^m}=(1 - u)Ln$. $M^m$ is the union of $m$ scaled permutation matrices, where each row and each column has $m$ entries of $\frac{1-u}{m}L$, and the entries on the diagonal are zero. W.l.o.g.~we assume that $0 < u < 1$ and $1 \leq m \leq (n-2)$. The matrix $M(n, m, u, L)$ is an abstraction of the simulation traffic where a $u$ fraction of the traffic has uniform flow sizes and a $1-u$ fraction of the traffic is mostly sparse with large flows, so decomposable to a few matchings.

In Appendix~\ref{sec:thecasestudy}, we analytically compute the DCT of \csn, \rsn, and two versions of \msn ($\bvnalg$ and $\simalg$) for the case study demand matrices. 
Therefore, for this demand matrix, we can systematically compare when each system has the smallest DCT for $M(n, m, u, L)$. 
\autoref{fig:2d} presents such a 2-dimensional DCT map for the reconfiguration times of the simulation setup, with $n=64$, and $m=20$ matchings. The $Y$ axis of the map is the fraction of uniform traffic $u$, and the $X$ axis is the normalized load $\frac{L}{r}$ where $r$ is the link rate from the simulation. Each point on this map represents a different demand matrix $M(n, m, u, L)$.
We then divide the map into three regions, and each region shows the range of parameters for which each system \csn, \rsn, or \msn has the smallest DCT. 

\autoref{fig:2d} (a) and (b) consider $\bvnalg$ and  $\simalg$ for \msn, respectively. 
There are some interesting observations when comparing the two figures. 
The blue fraction representing the vertical, $\bvnalg$, algorithm is larger in (a), suggesting that, despite that both algorithms send the \emph{same} size $m$ of the permutation set, $\bvnalg$ is more efficient. 
A second observation is to notice the point on the $X$ axis where $\dctda = \dctrot$. Solving for $x$ (by Equations~\eqref{eq:dctmultirot} and~\eqref{eq:dctmultida}) where $x=\frac{L}{r}$ we get $x=\frac{\eta (n-1)\crec}{2-\eta}$ which for our parameters is approximately $x\approx 0.6$. 
Above $x$ at least some matchings will use the \cs scheduler, and below $x$ at least some matchings will be sent to the \rs scheduler.  
\revisiontwo{This value for $x$ gives us the bound between the \csn and \rsn dominated areas. To generate the three different areas in the figure, we also need to find where \msn is dominant (i.e., its DCT is smaller).  Therefore, when, $x\geq 0.6$ (roughly), we need to find where \msn is dominant over \csn; that is where the value given by \autoref{eq:dctmultida} is larger than the DCT value given by \autoref{eq:dctVeralg} for \autoref{fig:2d}~(a) and \autoref{eq:dctSimalg} for \autoref{fig:2d}~(b). When $x< 0.6$ we need to find where \msn is dominant over \rsn that is, where the DCT value given in \autoref{eq:dctmultirot}  is larger than the DCT value given by \autoref{eq:dctVeralg} for figure \autoref{fig:2d}~(a) and \autoref{eq:dctSimalg} for \autoref{fig:2d}~(b).}
Overall, we can see, as expected, that increasing the load will reach a point where \csn is the best system (since the longer reconfiguration time is amortized), and decreasing the load will reach a point where \rsn has a smaller DCT. Looking at the $Y$ axis, we find that with a lower $u$ we observe higher success for \msn. Lower $u$ means a higher fraction is represented by the same $20$ matchings.
This makes the \rs scheduler less effective on them and the \cs scheduler less effective on the uniform traffic, justifying the traffic partition of \msn.

\setlength{\textfloatsep}{10pt}
\begin{algorithm}[t]
    \DontPrintSemicolon
   \caption{The \talg Algorithm}\label{alg:theo:mixnet}
   Input: $M \in \mathbf{R}^{n\times n}$ scaled double stochastic\;
   Initialize output: $P^{(da)}= \emptyset$, $P^{(rot)}= \emptyset$\;
   Decompose $M$ to a set permutation matrices $P(M)$ using maximum matchings\; 
    \ForAll {$P_i \in P(M) $}{
      $M_i = \alpha_i \cdot P_i$\;
      \eIf{ $\dctda(M_i) \le \dctrot(M_i)$ }
      {$P^{(da)}= P^{(da)} \bigcup P_i$}
      {$P^{(rot)}= P^{(rot)} \bigcup P_i$}
    }
    $M^{(da)}=\sum_{P_i \in P^{(da)}} M_i$ and $M^{(rot)}=\sum_{P_i \in P^{(rot)}} M_i$
\end{algorithm}

\nspace
\subsection{The Greedy Algorithm}\label{subsec:divalgo}
\autoref{alg:theo:mixnet} presents \talg (\gmn), a natural and simple algorithm to minimize $\dctmix$ for \msn. 
Given a saturated demand matrix $M$ as input, the algorithm first decomposes $M$ into a set of permutation matrices $P(M)$ and the corresponding scaling factors $\alpha$.
It does so in a standard way, iteratively and greedily, computing a maximum matching in $M$, setting $M_i$ to be a scaled permutation matrix $\alpha_i P_i$ (e.g., by considering $\alpha_i$ as the minimum value in the matching), and removing $M_i$ from $M$. This method is guaranteed to decompose $M$ to a set (not necessarily minimal) of scaled permutation matrices, as proven by the BvN theorem. 
Next, for each $M_i$, we then add $P_i$ either to $P^{(da)}$ if $\dctda(M_i)\leq \dctrot(M_i)$ and otherwise we add it to $P^{(rot)}$.
Since $M_i$ is a scaled permutation matrix, we can easily calculate the \emph{optimal} DCT of each individual matching $M_i$ on both \rsn and \csn and determine the partition of matrices to sub-systems.
For \csn and $\dctda$, using Equation~\eqref{eq:daDCT} with $m=1$ and for \rsn and $\dctrot$ we can use Equation~\eqref{eq:rsnDCT}.
We then define, as described earlier, the traffic that will use the \cs scheduler as  $M^{(da)}=\sum_{P_i \in P^{(da)}} M_i$ and the traffic that will use the \rs scheduler as $M^{(rot)}=\sum_{P_i \in P^{(rot)}} M_i$.

The $\dctmix$ of the algorithm is then defined as in Equation~\eqref{eq:dctmix}, and using Equation~\eqref{eq:daDCT} and~\eqref{eq:rsnDCT} as,
\begin{align}
    \dctmix(M,\gmn) = \frac{\card{M^{(da)}}}{nr}+m\crec + \frac{2}{\eff r} \frac{\card{M^{(rot)}}}{n}.
\end{align}
\nspace

\revisiontwo{We can now bound the DCT of \msn with  \gmn.
\begin{theorem}\label{thm:gmn}
For any saturated demand matrix $M$,
\begin{align*}
     \dctmix(M,\gmn)\leq \min( \dctrot(M, \val), \dctda(M, \bvn))
\end{align*} 
\end{theorem}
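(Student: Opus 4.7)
The plan is to exploit the fact that, once we fix the greedy maximum-matching decomposition of $M$ into scaled permutations $M_i = \alpha_i P_i$, both $\dctda$ and $\dctrot$ are \emph{additive} across this decomposition. After establishing additivity, the theorem reduces to a one-line pointwise comparison: \gmn routes each $M_i$ through whichever of the two systems has the smaller per-matrix DCT, so its total cannot exceed the total obtained by sending every $M_i$ through a single fixed system.

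\textbf{Step 1: Additivity on a fixed decomposition.} First I would verify the two identities that make the proof go through. If $P^{(da)}$ contains $k$ matrices, then by Equation~\eqref{eq:daDCT},
\begin{align*}
\dctda(M^{(da)}) \;=\; \frac{|M^{(da)}|}{nr} + k\crec \;=\; \sum_{P_i\in P^{(da)}} \left(\frac{\alpha_i}{r}+\crec\right) \;=\; \sum_{P_i\in P^{(da)}} \dctda(M_i),
\end{align*}
where each $\dctda(M_i)$ is the single-permutation instance ($m=1$) of Equation~\eqref{eq:daDCT}. Analogously, by Equation~\eqref{eq:rsnDCT},
\begin{align*}
\dctrot(M^{(rot)}) \;=\; \frac{2\,|M^{(rot)}|}{\eta r n} \;=\; \sum_{P_i\in P^{(rot)}} \dctrot(M_i).
\end{align*}
Hence $\dctmix(M,\gmn) = \sum_{P_i\in P^{(da)}}\dctda(M_i) + \sum_{P_i\in P^{(rot)}}\dctrot(M_i)$, and, using the same decomposition inside the pure-\csn and pure-\rsn baselines, $\dctda(M,\bvn)=\sum_i \dctda(M_i)$ and $\dctrot(M,\val)=\sum_i \dctrot(M_i)$.

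\textbf{Step 2: Apply the greedy selection rule pointwise.} The \gmn rule places $P_i$ into $P^{(da)}$ exactly when $\dctda(M_i)\le \dctrot(M_i)$. Consequently, for every $P_i\in P^{(rot)}$ we have $\dctrot(M_i)<\dctda(M_i)$, and for every $P_i\in P^{(da)}$ we have $\dctda(M_i)\le \dctrot(M_i)$. Summing over the two subsets gives two inequalities,
\begin{align*}
\dctmix(M,\gmn) \;\le\; \sum_{i=1}^{m} \dctda(M_i) \;=\; \dctda(M,\bvn), \qquad
\dctmix(M,\gmn) \;\le\; \sum_{i=1}^{m} \dctrot(M_i) \;=\; \dctrot(M,\val),
\end{align*}
and taking the minimum yields the theorem.

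\textbf{Main obstacle.} The only subtle point is the interpretation of $\bvn$ in the theorem statement: the bound $\dctmix(M,\gmn)\le \dctda(M,\bvn)$ is meaningful only when $\bvn$ refers to the same greedy maximum-matching decomposition used inside \gmn (a different decomposition with fewer terms could, in principle, yield a smaller $\dctda$). I would state this explicitly as a remark, noting that since minimizing the number of terms in a BvN decomposition is NP-complete (as already mentioned in the excerpt), comparing against the optimal $\bvn$ is out of scope; the natural and fair comparison is against the same underlying decomposition, and in that setting the argument above is tight. A second minor point is that the additivity of $\dctrot$ tacitly assumes the reconfiguration overhead of the rotor schedule is already absorbed into the duty cycle $\eta$, which is consistent with how Equation~\eqref{eq:rsnDCT} was derived.
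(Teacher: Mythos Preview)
Your proof is correct and follows essentially the same approach as the paper: both arguments hinge on the additivity (linearity) of $\dctda$ and $\dctrot$ over a fixed BvN decomposition, together with the pointwise minimum selection rule of \gmn. Your write-up is in fact more explicit than the paper's (which just appeals to linearity and Observation~\ref{obs:PermOnMsn}), and your remark that $\bvn$ must refer to the same greedy decomposition as \gmn is a valid subtlety that the paper leaves implicit.
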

}

We can examine the performance of \gmn and \msn on the 
case study demand from \autoref{sec:testcase} (i.e., $M(n,m,u,L)$).
For this case, we can show that \gmn and \msn will match the DCT of the best of the three systems we considered for $M(n,m,u,L)$. For example, \gmn will match the results of each region in \autoref{fig:2d} (a) (see Appendix \ref{sec:thecasestudy}). 

\subsection{Empirical Evaluation of \talg} 
We evaluate \gmn on the same traffic as generated for the simulations in \autoref{sec:evaluation}.
The results, shown in \autoref{fig:2d} (c), are similar \revision{in spirit } to the theoretic results in \autoref{fig:2d} (a) \revision{ \& (b)}.
This shows how our simplified traffic model is able to capture the essence of more realistic traffic.

To understand this figure recall that \talg needs to decompose $M$. We use a simple greedy decomposition algorithm based on BVN in our implementation. Given a demand matrix $M$ we search for the maximal weighted matching and remove it from $M$.
In the general case, we would fully decompose a matrix, however, since each subsequent matching found is getting smaller, we decompose the matrix until there is some fraction of traffic left. 
While we do not explicitly consider this, we assume small traffic can be sent using the \ss sub-topology. Let us denote this fraction as $L_s$. 
In this section, we consider $L_s=0.05$, i.e. $5\%$ of the total traffic is assumed to be sent using the \ss sub-topology.
Since the simulation traffic does not necessarily adhere to the saturation requirement, we cannot assume that $M$ is perfectly saturated. Each matching might also not be saturated. We, therefore, use a cutoff value for the matching. As a simple heuristic, we choose the average value over their original matching as the cutoff. Each value in the found matching will be, at most, the average value after the cutoff.
In Figure \ref{fig:2d} (c), we show results based on five real traffic traces from the simulation as described earlier, in \autoref{subsec:Settings}. 
The specific traces used are with a load $L=40\%$ with five different datamining shares $(40\%, 55\%,70\%,85\%,100\% )$. 
Furthermore, for the sake of consistency between different loads, we normalize all matrices to the global mean matrix value. 
The traffic trace itself is a sequence of ToR-level traffic matrices, each representing a total of $0.05$ seconds of accumulated remaining traffic from the simulation. That is, after each $0.05$ seconds, we gather all pending traffic into an accumulated matrix. 
Each row represents a different datamining load, and each column represents an accumulated trace. That is, at the cell $(0.3,0.7)$, we gather the first $14$ matrices from the $70\%$ trace.
To color the cells we denote cells with less than a fraction of $\frac{1}{12}$ of the total DCT in \rsn pure \csn and vise versa. Otherwise, we assume the optimal system is \msn. 

In Figure \ref{fig:emp:GMNimprove}, we see the DCT results for four algorithms, \talg, $\simalg$, \csn, and \rsn on simulated traces; the traffic is the same as in Figure \ref{fig:2d} (c) for a load of $40\%$.
Note that we could only implement $\simalg$ and not $\bvnalg$ on these traces. Recall that traffic is generated such that a fraction of the load is from uniform traffic, and a portion is sampled from datamining. To implement $\simalg$, we sent the uniform traffic to \rsn and the rest to \csn.  
To implement $\bvnalg$, we would need to know in advance how to decompose the matrix into a set of $m$ 'large' matchings, for \csn, and a set of $n-1-m$ 'small' matchings, for \rsn, which is not normally possible. This is different than our \talg, which tries to find an optimal set $m$ matching to send to \csn

Regarding the results for \rsn, we see that all the results are the same for \rsn. We can explain this by looking at the DCT of \rsn. In \autoref{eq:rsnDCT}, we see that it is a function of the total load and not a function of the structure of the matrix itself.  The \emph{total} traffic of the simulated matrices was normalized to the same expected size, that is $t\cdot L\cdot r\cdot n\cdot k$, where $t$ is the time represented by the matrix in seconds, resulting in a similar outcome. For the other algorithms, the BVN decomposition results in a different DCT due to the different sizes of the decomposition.

To conclude, this figure shows that \talg outperforms all other algorithms, consistently achieving a lower DCT of about $0.88$~s.  
For \csn, the DCT increases with \datamining share, from about 0.94~s to 1.08~s. The DCT of \rsn is 0.982~s for all \datamining shares since all matrices have the same volume.
\begin{figure}
    \centering
    \includegraphics[width=0.9\linewidth]{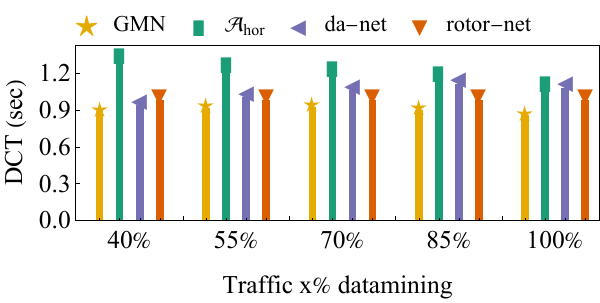}
    \vspace{-10pt}
    \caption{\revision{Empirical results on the 40\% load traces showing the DCT of \talg, $\simalg$, \csn and \rsn.}  
    }
    \label{fig:emp:GMNimprove}
\end{figure}
\nspace

\section{Conclusions}\label{sec:conclusion}
Motivated by the desire to maximize throughput in datacenters whose workloads evolve over time, we presented a self-adjusting datacenter network whose topology can be quickly adapted to dynamically and optimally match the workload requirements.
Our architecture, \system, is enabled by Sirius and relies on a novel decentralized control plane with links and packet scheduling. 

Our work opens several interesting avenues for future research. 
In particular, while we have primarily focused on performance, it will be interesting to study how our architecture can be extended to provide resilience under different failure models. 
\revision{It will also be interesting to further study scalability aspects related to the limited or heterogeneous number of ports or laser frequencies.}

\section*{Acknowledgements}
Research supported by European Research Council (ERC), grant agreement No. 864228 (AdjustNet), Horizon 2020, 2020-2025.

\label{lastpage}
{ \footnotesize \bibliographystyle{unsrt}
\bibliography{bibi.bib}
}

\newpage
\appendix

\section{Details on \debruijn graph-based topology}\label{sec:appendix:debruijn}

\begin{table}[!htb]
    \centering
    \footnotesize
    \begin{tabular}{l|c|c|l}
        Prefix & Port & Len & IP\\\hline
        $11*$ & $DA$ & $2$ & $10.192.0.0/10$ \\
        $101$ & $\{1, DA\}$ & $3$ & $10.160.0.0/11$ \\
        $011$ & $\{1, DA\}$ & $3$ & $10.96.0.0/11$ \\
        $010$ & $1$ & $3$ & $10.64.0.0/11$ \\
        $001$ & $1$ & $3$ & $10.32.0.0/11$ \\
        $000$ & $0$ & $3$ & $10.0.0.0/11$ \\
        \hline
        $100$ & Local & $0$ & $10.128.0.0/11$
    \end{tabular}
    \caption{Forwarding table at node $100$ in \autoref{fig:trio:debruijn}.}\label{tab:trio:debruijn_fwd_table}
\end{table}
\revisiontwo{
\autoref{tab:trio:debruijn_fwd_table} shows the forwarding table of node $010$ in \autoref{fig:trio:debruijn} (including the \cs port). The table includes the forwarding port IDs, $0, 1$ or $DA$, the path length to destinations using this rule, and the IP representation. Besides the logical addressing in the \debruijn graph, it also illustrates how these addresses can be embedded into IP(v4) addresses and, thereby, how the \debruijn graph-based topology component can be realized with today's packet switching equipment. 
Note also that multiple, equal cost paths are supported.\footnote{More details on embedding the \debruijn address into the IP addresses, ports IDs, and the table update procedure are provided in~\cite{zerwas2023duo}.}}

\section{Additional Material for the Design of \system}

\begin{table}[!htb]
    \centering
    \footnotesize
    \begin{tabular}{c|p{6cm}}
        Variable & Description \\ \hline\hline
        $n$ & Total number of ToRs. \\
        $k$ & Total number of spines \\
        $k_r$ & Total number of rotor spines \\
        $T_i$ & ToR i, $1\leq i\leq n$ \\
        $R_i$ & Rotor i,  $1\leq i\leq \rnum$ \\
        $h=n\cdot k$ & Total number of hosts \\
        $C=r\cdot \delta$ & Slot active capacity (volume to send) \\
        $r$ & Link rate \\
        $\Bar{C} = \frac{C\cdot \rnum}{k}$ & host ToR rotor capacity\\
        $h_{i,j}$ & Host i on ToR j (can be translated to an integer for indexing) \\
        $u_{i,j}$ & Uplink i on ToR j, ($1\leq i \leq \rnum$) \\
        $T_{u_{i,j}}$ & ToR connected on uplink i on ToR j \\
        \hline
        \textbf{Input} & \\ \hline\hline
        $D^l_{i,j}$ & current local demand from host i to host j\\
        $D^n_{i,j}$ & current non-local demand from host i to host j\\
        \hline
        \textbf{Output} & \\ \hline\hline
        $S^{ld}_{i,j}$ & direct local volume to send from host i to j\\
        $S^{n}_{i,j}$ & non-local volume to send from host i to j\\
        $S^{li}_{i,j,k}$ & indirect local volume to send from host i to j with final destination k\\
    \end{tabular}
    \caption{Notation.}
    \label{tab:trio:notation}
\end{table}

\subsection{\system Packet Scheduling (LocalLB)}

This section provides the algorithms underlying \system's packet scheduling.
First, \autoref{tab:trio:notation} summarizes the used notation. In particular, it defines the input (sizes of the local and non-local buffers) and output data structures (allocated volumes from local and non-local to the destination and from local to an intermediate host).
\autoref{alg:scheduling:fairshare} is a sub-routine used by LocalLB. 
The routine is summarized in pseudo-code from prior work (the code base) from \opera~\cite{opera}.
Given a matrix of demand to be sent, it applies a 2-dimensional fair share over the sending and receiving capacities (given as vectors). Therefore, it repeats sweeping the rows (l. 5f) and then the columns (l. 7f) until the output values converged.

\autoref{alg:trio_scheduling} is the full packet scheduling algorithm with the major steps as described in \autoref{ssec:scheduler}. It takes a per ToR perspective (the algorithm runs on ToR $t$); all data structures are rack-local but the algorithm listing partially uses global indices to simplify notation.

In lines 2-6, the algorithm allocates the second hop non-local traffic and local traffic that can be sent directly to the destination. For each buffer type, it iterates over the \rs ports and per port collects the demand information (local or non-local respectively) of all source and destination hosts that are connected by the port.
Then, it allocates the demand using the 2-dimensional fairshare (FS2D, l. 5).
The second part (lines 7- 21) allocates traffic to be sent via an intermediate host. Again, the algorithm iterates over the \rs ports.
Per port, it iterates over the hosts in the destination rack (with the order changing in a round-robin fashion between calls of the packet scheduling).
For each such candidate intermediate host, it searches for the destination with the highest demand per source host (lines 9-13). 
Then, it iterates over these demands in non-decreasing order. If a demand $z_i$ exceeds the remaining capacity in the slot to the intermediate host, the equal-share of the remaining capacity is given to all remaining demands $>0$ (lines 15-18). If the demand does not exceed the remaining capacity to the intermediate host, it is greedily allocated (l. 20f).
Note that the algorithm does not evaluate the capacity of the slot from the intermediate node to the actual destination but relies on \system's offloading (\autoref{alg:trio_offloading}) to compensate for overloads.

\subsection{\system FIB}
To illustrate the content of ToRs' FIBs, we consider a network with eight ToRs. We consider a point in time at which \ss and \cs ports are connected according to the scenario depicted in \autoref{fig:trio:debruijn}. In addition to the ports shown there, each ToR has one \rs port.
The IP addressing scheme follows the description in~\cite{zerwas2023duo}. We use bits 22-24 to encode the \debruijn node addresses. The lower 21 bits can be used to address hosts inside the rack.
The rules match on a combination of exact match (for the slot label) and LPM (for the destination address).
To give an example, \autoref{tab:trio:full_fib} shows the FIB for node 100. 
The upper part (note that the order was chosen arbitrarily and does not necessarily reflect the order in memory/the order of the lookup) shows the forwarding rules for the \ss and \cs part and essentially matches \autoref{tab:trio:debruijn_fwd_table}.
The only addition is an exact match on the slot label to be $0$.
The lower part of the FIB shows the rules for the \rs forwarding. Here, the matching value of the slot label depends on the active slot. The match also considers the destination address to perform forwarding if multiple \rs ports are present.
Packets that do not match any rule are dropped.

\begin{table}[t]
    \centering
    \begin{tabular}{c|c | c | c}
        \# &  Slot Label & Dst. Addr. & Port \\\hline\hline
        1 & 0 & $10.192.0.0/10$ & $DA$ \\
        2 & 0 &$10.160.0.0/11$ & $\{1, DA\}$ \\
        3 & 0 & $10.96.0.0/11$  & $\{1, DA\}$ \\
        4 & 0 &  $10.64.0.0/11$ &  $1$\\
        5 & 0 &$10.32.0.0/11$ & $1$  \\
        6& 0 & $10.0.0.0/11$ &  $0$ \\\hline
        7& 0 &  $10.128.0.0/11$ & Local \\\hline
        8 & 1 & 10.96.0.0/19 & \{\rs\} \\
         9 & 2 & 10.128.0.0/19 & \{\rs\} \\
         10 & 3 & 10.160.0.0/19 & \{\rs\} \\
         11 & 4 & 10.192.0.0/19 & \{\rs\} \\
         12 & 5 & 10.224.0.0/19 & \{\rs\} \\
         13 & 6 & 10.0.0.0/19 & \{\rs\} \\
         14 & 7 & 10.32.0.0/19 & \{\rs\} \\\hline
        15 & * & * & drop
    \end{tabular}
    \caption{Examplary forwarding table for node 100 following the \ss and \cs topologies in \autoref{fig:trio:debruijn} with one additional \rs port.}\label{tab:trio:full_fib}
\end{table}

\begin{algorithm}[!htb]
\caption{\system Offloading (from Rotor to \debruijn) at host $v$}\label{alg:trio_offloading}
        \For{dest $u=1\dots h$, ($v$ and $u$ not in same rack)}{ 
            \eIf{$D^l_{v,u}> d_{\mathrm{off}}$}{
                Offload $D^n_{v,u}$ to backbone topology
            }{
                \If{$D^l_{v,u} + D^n_{v,u} > d_{\mathrm{off}}$}{
                    Offload $D^l_{v,u} + D^n_{v,u} - \frac{C}{k}$ from $D^n_{v,u}$ to backbone
                }
            }
        }
\end{algorithm}

\begin{algorithm}[t]
    \DontPrintSemicolon
    \caption{2-dimensional fair share (from \opera's repository~\cite{opera}).}\label{alg:scheduling:fairshare}
    $I\in \mathbf{R}^{N\times M}, c0\in\mathbf{R}^N, c1\in\mathbf{R}^M$\;
    Initialize output: $O\in \mathbf{R}^{N\times M}$\;
    Count elements $>0$ in $I$ as $n^+$\;
    \While{$n^+>0$ and max iterations not reached}{
        \For{$i=1\dots N$}{
            $t[i]$ = 1-dimensional fair share on $I[i]$ with capacity $c0[i]-\sum_{j=1\dots M}O[i,j]$\;
        }
        \For{$j=1\dots M$}{
            $t2$ = 1-dimensional fair share on $j$-th column of $t$ with capacity $c1[j]-\sum_{i=1\dots N}O[i, j]$\;
        }
        Update $t$ with $t2$\;
        Update $I$, $O$ with $t$\;
        Zero out rows $I[i]$ where $c0[i]=0$\;
        Zero out columns $I[j]$ where $c1[j]=0$\;
        Count elements $>0$ in $I$ as $n^+$\;
    }
\end{algorithm}

\begin{algorithm}[t]
    \DontPrintSemicolon
    \caption{\system Packet-Scheduling (LocalLB) for ToR $t$}\label{alg:trio_scheduling}
     $C^{TX}_i\gets \Bar{C}$, $C^{RX}_{i,b}\gets \frac{C}{k}, \forall i=1\dots k, \forall b=1\dots k_r$\;
    \tcp{Allocate Non-local (n) on the 2nd hop and local (l)}
     \For{x $\in \{n, l\}$ }{
         \For{$b=1\dots \rnum$}{
             $W_{i,j}\gets D^x_{h_{i,t},h_{j,T_{b,t}}}\ \forall i=1\dots k, j=1\dots k$\;
             $S^x \gets FS2D(W, C^{TX}, C^{RX})$\;
             Update $C^{RX}, C^{TX}, D^x$\;
         }
     }
     \tcp{Allocate new indirect traffic on the remaining capacity}
     \For{$b=1\dots \rnum$}{
        \tcp{Iterate over all hosts in ToR connected via $b$}
         \For{$\forall a\in \left\{h_{j, T_{b,t}}, j=1\dots k\right\}$, order RR over slots}{
            \tcp{Find the largest demand per source host capped by remaining sending capacity}
             $V_{i,m}\gets D^{l}_{h_{i,t},h_{m}} \forall i=1\dots k,m=1\dots nk$\;
             $v_{i,m}=v_{i,m}-\frac{C}{k}$ and zero all $<0$\;
             $y_i\gets\arg\max_m v_{i,m}, \forall i=1\dots k$\; 
             $z_i\gets \min\{v_{i,y_i}, C^{TX}_i \}$\;
             $n_z\gets$ number of $z_i >0$\;
             \For{$\forall i$ where $z_i>0$ in non-decreasing order of $z_i$}{
                 \eIf{$z_i > C^{RX}_{a,b}$}{
                     Allocate $\frac{C^{RX}_{a,b}}{n_z}$ for all $i$ with $z_i > 0$\;
                     Update $C^{TX}, C^{RX}, D^l$\;
                     Break\;
                 }{
                     Allocate $z_i$ for $i$ and update $C^{TX}, C^{RX}, D^l$\;
                     Decrement $n_z$\;
                 }
             }
         }
     }
\end{algorithm}
\newpage

\begin{table*}[t]
    \centering
    \footnotesize
    \begin{tabular}{|m{.5cm}|c|c|c|c||c|c|c||c|c|c|}
        \hline
        & & \multicolumn{3}{c||}{$10$\,G} & \multicolumn{3}{c||}{$50$\,G} & \multicolumn{3}{c|}{$100$\,G} \\
        Type & Parameter & Opera  & \system & \duo  & Opera  & \system & \duo & Opera  & \system & \duo\\ \hline\hline
        \multirow{3}{*}{\rotatebox[origin=c]{90}{\scriptsize{\rs}}} & $\rrec$   &    $10\mu s + 1.7\mu s$  & $100ns + 1.7\mu s$ & NA & $10\mu s + 0.74\mu s$  & $100ns + 0.74\mu s$ & NA & $10\mu s + 0.62\mu s$  & $100ns + 0.62\mu s$ & NA\\
        & $\delta$  &  $33.46\cdot\rrec$ &  $54.56\cdot\rrec$ & NA &  $13.65\cdot\rrec$ &  $14.65\cdot\rrec$ & NA &  $10.93\cdot\rrec$ &  $12.88\cdot\rrec$ & NA  \\
        & Duty cycle & $97\%$ & $98\%$ & NA & $93\%$ & $93\%$ & NA & $92\%$ & $92\%$ & NA \\\hline
        \multirow{3}{*}{\rotatebox[origin=c]{90}{\shortstack{\scriptsize{\textsc{Demand}} \\ \scriptsize{\textsc{-aware}}}}} & $\crec$   &   NA   & $1ms = 10k \cdot\rrec$ & $1ms$ &   NA   & $100\mu s = 2k \cdot\rrec$ & $100\mu s$ &   NA   & $100\mu s = 1k \cdot\rrec$ & $100\mu s$  \\
        & $\delta_d$  & NA & $49 \crec$ & $49 \crec$ & NA & $49 \crec$ & $49 \crec$ & NA & $49 \crec$ & $49 \crec$ \\
        & Duty cycle & NA & $98\%$ & $98\%$ & NA & $98\%$ & $98\%$ & NA & $98\%$ & $98\%$ \\
        \hline
    \end{tabular}
    \caption{Reconfiguration parameters.}
    \label{tab:eval:parameters}
    \label{tab:eval:parameters_100G}
    \label{tab:eval:parameters_50G}
\end{table*}

\section{Analysis of Case Study}\label{sec:thecasestudy}
We first consider the DCTs of \rsn and \csn. 
For \rsn, we can use Equation~\eqref{eq:rsnDCT}, so 
\begin{align}
    \dctrot(M(n, m, u, L))=2\frac{\card{M(n, m, u, L)} }{\eff n r} =\frac{2L}{\eff r}. \label{eq:dctmultirot}
\end{align}

For \csn, we can decompose the matrix into the $m$ original permutation matrices and additionally a set of $(n-1-m)$ permutation matrices for the rest of the uniform matrix. In total, we have $n-1$ matchings. Using Equation~\eqref{eq:daDCT}, we get the following result,

\begin{align}
    \dctda(M(n, m, u, L))=\frac{L}{r}+ (n-1)\crec. \label{eq:dctmultida}
\end{align}
In both cases, we note that the DCT is a function of the 
the sum of a row $L$.

Next, we study \msn. For this, we need first to find a partition of $M(n, m, u, L)$ into $M^{(da)}$ and $M^{(rot)}$.
We consider two potential algorithms: $\simalg$ and $\bvnalg$.

The $\simalg$ algorithm (horizontal) is motivated by the simulation setup and sets $M^{(da)} = M^m$ and $M^{(rot)}= M^u$ where all uniform traffic is sent with the \rs link scheduler, and all the permutation matrices are sent with the \cs scheduler. The $\dctmix$ in this case is

\begin{align}\label{eq:dctSimalg}
\dctmix(M(n, m, u, L), \simalg) &=\dctda(M^{m})+\dctrot(M^u) \nonumber\\
&= u\frac{L}{r}+ m \crec+ (1-u)\frac{2L}{\eff r}.
 \end{align}

For the $\bvnalg$ algorithm (vertical), we again decompose $M(n, m, u, L)$  into a set of $m$ matchings sent with the \cs scheduler, but here the total traffic in the $m$ matchings also includes traffic from the uniform matrix; the remaining matrix is sent with the \rs scheduler. 
The DCT is determined by the average row size of $M^{(da)}$ and $M^{(rot)}$.
In $M^{(da)}$, a cell form one of the $m$ matchings, has $ L\frac{1-u}{m}$ bits from $M^m$ and $L\frac{u}{n-1}$ bits from $M^u$.
For the remaining entries, each of the $(n-1-m)$ non-zero entries has $L\frac{u}{n-1}$ bits. The $\dctmix$ is then
\begin{align}\label{eq:dctVeralg}
&\dctmix(M(n, m, u, L), \bvnalg) = \notag \\
&\frac{L m}{r}(\frac{ 1-u}{ m}+\frac{u}{n-1})+ m\crec+ \frac{2Lu}{\eff r} (\frac{n-1-m}{n-1}).   
\end{align}  

\subsection{Formal results}
Formally for $M = M(n,m,u,L)$, we can show the following:
\begin{claim}\label{clm:gmn}
Let $M$, a demand matrix be $M = M(n,m,u,L)$, than the following holds for the algorithm $\gmn$.
\begin{align}
    &\dctmix(M,\gmn) = \notag \\
    &=\min\{\dctmix(M, \bvnalg), \dctda(M,\bvn), \dctrot(M,\val)\}
\end{align}   
\end{claim}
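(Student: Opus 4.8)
Fix $M=M(n,m,u,L)$ and spell out its entries: every off-diagonal cell equals $b:=\tfrac{uL}{n-1}$, with an extra $\tfrac{(1-u)L}{m}$ added on the $m$-regular support of $M^m$ (so those cells equal $a:=\tfrac{(1-u)L}{m}+\tfrac{uL}{n-1}$), the diagonal is $0$, and $L=ma+(n-1-m)b$. The plan is: (1) pin down exactly which decomposition and which \cs/\rs partition \gmn produces on this particular $M$; (2) read off $\dctmix(M,\gmn)$ in closed form and observe it always equals one of the three quantities on the right-hand side; (3) close the equality. Step~(3) is short once~(2) is in hand: because $\dctmix(M,\gmn)$ is \emph{one of} $\dctmix(M,\bvnalg),\dctda(M,\bvn),\dctrot(M,\val)$, the inequality ``$\dctmix(M,\gmn)\ge\min\{\cdots\}$'' is automatic, and Theorem~\ref{thm:gmn} already gives ``$\dctmix(M,\gmn)\le\dctda(M,\bvn)$'' and ``$\dctmix(M,\gmn)\le\dctrot(M,\val)$'', so only the comparison with $\dctmix(M,\bvnalg)$ is left.

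\textbf{Step 1: the greedy decomposition.} I claim the greedy maximum-weight perfect-matching decomposition peels off, in its first $m$ rounds, the $m$ edge-disjoint permutations forming the support of $M^m$, each with coefficient exactly $a$, and then splits the residual matrix into $n-1-m$ further permutations each with coefficient $b$. For the first part: at each of these rounds the set of cells still equal to $a$ is an $r$-regular bipartite graph with $r\ge 1$, which therefore contains a perfect matching of weight $na$; any perfect matching using even one weight-$b$ edge has weight at most $(n-1)a+b<na$; hence the greedy matching lies entirely among the weight-$a$ cells, its minimum entry is $a$, and removing it drops the regularity of that layer by one. After $m$ rounds the weight-$a$ layer is empty and the residual matrix is $b$ times the $0/1$ adjacency matrix of an $(n-1-m)$-regular bipartite graph (the complement of the $M^m$-support inside the off-diagonal), which by König's edge-colouring theorem decomposes into $n-1-m$ perfect matchings, each producing coefficient $b$. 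This is an $(n-1)$-term decomposition with coefficient sum $L$; since every term of any decomposition of $M$ is necessarily a derangement, $n-1$ is also the minimum number of terms, so it realises $\dctda(M,\bvn)=\tfrac{L}{r}+(n-1)\crec$ as in~\eqref{eq:dctmultida}.

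\textbf{Step 2: the partition and three regimes.} For a single scaled permutation $M_i=\alpha_iP_i$, comparing $\dctda(M_i)=\tfrac{\alpha_i}{r}+\crec$ with $\dctrot(M_i)=\tfrac{2\alpha_i}{\eff r}$ shows \gmn assigns $M_i$ to \cs exactly when $\alpha_i\ge\alpha^\ast:=\tfrac{\eff r\,\crec}{2-\eff}$. Since $b<a$ there are three regimes. If $b\ge\alpha^\ast$: all $n-1$ terms go to \cs, so $M^{(da)}=M$ and $\dctmix(M,\gmn)=\dctda(M,\bvn)$. If $b<\alpha^\ast\le a$: the $m$ heavy terms go to \cs and the $n-1-m$ light ones to \rs; substituting $\card{M^{(da)}}=nma$ and $\card{M^{(rot)}}=n(n-1-m)b$ into the formula for $\dctmix(M,\gmn)$ gives
\[
\dctmix(M,\gmn)=\tfrac{ma}{r}+m\crec+\tfrac{2(n-1-m)b}{\eff r},
\]
and since $\tfrac{ma}{r}=\tfrac{Lm}{r}\bigl(\tfrac{1-u}{m}+\tfrac{u}{n-1}\bigr)$ and $\tfrac{2(n-1-m)b}{\eff r}=\tfrac{2Lu}{\eff r}\cdot\tfrac{n-1-m}{n-1}$, this is term-for-term the right-hand side of~\eqref{eq:dctVeralg}, i.e.\ $\dctmix(M,\gmn)=\dctmix(M,\bvnalg)$. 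If $a<\alpha^\ast$: all terms go to \rs, so $M^{(rot)}=M$ and $\dctmix(M,\gmn)=\dctrot(M,\val)$, which equals $\tfrac{2L}{\eff r}$ as in~\eqref{eq:dctmultirot}.

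\textbf{Step 3: closing the equality, and the hard point.} In each regime $\dctmix(M,\gmn)$ equals one of the three right-hand quantities, so $\dctmix(M,\gmn)\ge\min\{\dctmix(M,\bvnalg),\dctda(M,\bvn),\dctrot(M,\val)\}$ trivially; by Theorem~\ref{thm:gmn} also $\dctmix(M,\gmn)\le\min\{\dctda(M,\bvn),\dctrot(M,\val)\}$, so only $\dctmix(M,\gmn)\le\dctmix(M,\bvnalg)$ remains. This is an equality in the middle regime. In the first regime, cancelling the common \cs-side term $\tfrac{ma}{r}+m\crec$ and using $L-ma=(n-1-m)b$ reduces the claim to $(n-1-m)\bigl(\tfrac{b}{r}+\crec\bigr)\le\tfrac{2(n-1-m)b}{\eff r}$, i.e.\ to $b\ge\alpha^\ast$ — the hypothesis of that regime. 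Symmetrically, in the third regime, cancelling the common \rs-side term $\tfrac{2(n-1-m)b}{\eff r}$ reduces $\dctrot(M,\val)\le\dctmix(M,\bvnalg)$ to $\tfrac{2ma}{\eff r}\le\tfrac{ma}{r}+m\crec$, i.e.\ to $a\le\alpha^\ast$ — again the hypothesis. This proves the claim. The delicate step is Step~1: one must show that greedy, which commits only to \emph{some} maximum-weight perfect matching, is nonetheless forced to strip the $m$ heavy permutations first, and with coefficient precisely $a$, no matter how ties are broken, and that the leftover block is an exactly-uniform $(n-1-m)$-regular matrix; the tools (regular bipartite graphs admit perfect matchings, deleting one drops regularity by one, König edge-colouring) are standard, but keeping the $a$- and $b$-layers cleanly separated throughout the peeling is where the care lies, while everything in Steps~2--3 is elementary algebra driven by $L=ma+(n-1-m)b$ and the definition of $\alpha^\ast$.
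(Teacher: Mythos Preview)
Your proof is correct and follows the same line as the paper's: both observe that the greedy BvN decomposition of $M(n,m,u,L)$ produces exactly two distinct coefficients ($a$ on the $m$ heavy permutations, $b$ on the remaining $n-1-m$), so that \gmn can only output one of three partitions---all to \cs, all to \rs, or precisely the $\bvnalg$ split---and hence equals the minimum of the three. You supply considerably more detail than the paper does, especially in Step~1, where you justify via regular-bipartite-graph arguments that greedy maximum-weight peeling is \emph{forced} to strip the $m$ heavy permutations first with coefficient exactly $a$; the paper simply asserts the two-coefficient structure without argument.

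One caution on logical order: in Step~3 you invoke Theorem~\ref{thm:gmn}, but in the paper that theorem is proved (in the appendix) via Observation~\ref{obs:PermOnMsn}, which is itself presented as a consequence of this very Claim---so as written the appeal is formally circular. The fix is trivial: either note that Observation~\ref{obs:PermOnMsn} is immediate on its own (for a single scaled permutation \gmn literally compares $\dctda$ to $\dctrot$ and takes the smaller), so Theorem~\ref{thm:gmn} does not actually depend on Claim~\ref{clm:gmn}; or alternatively verify the remaining inequalities $\dctmix(M,\gmn)\le\dctda(M,\bvn)$ and $\dctmix(M,\gmn)\le\dctrot(M,\val)$ directly in each regime, exactly as you already do for the comparison with $\bvnalg$.
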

\begin{proof}
    After decomposing $M$ using \bvn, we first observe that there are only two sizes of permutations matrices in the decomposition. The first, has a size of $\frac{uL}{n-1}$ per row. These permutations were made directly from the matrix $M^u$. And an other set of permutations made from the overlap between  $M^u$ and $M^m$, these will have a row size of $\frac{(1-u)L}{m(n-1)}+\frac{uL}{n-1}$.
    Since \gmn tests each permutation on an individual basis, it has exactly three options. Send all permutations to \csn, send all permutations to \rsn, or send the smaller of the two types to \rsn and the larger to \csn. Since the second set with the size of  $\frac{(1-u)L}{m(n-1)}+\frac{uL}{n-1}$ is always larger $\gmn$ will be emulating $\bvnalg$. Lastly, note that $\gmn$ will never send the smaller permutations to \csn and the larger to \rsn. This is true for any matrix, as it's a results of the DCT of both networks being linear. 
\end{proof}

The following is a simple observation of Claim \ref{clm:gmn}.

\begin{observation}\label{obs:PermOnMsn}
    Let $P\in P_\pi$ be a permutation matrix. \msn has  an equal DCT to either \csn or \rsn for $P$,
    \begin{align}
     \dctmix(P,\gmn) = \min( \dctrot(P, \val), \dctda(P, \bvn))
\end{align} 
\end{observation}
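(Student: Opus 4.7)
The plan is to exploit the structural simplicity of a (scaled) permutation matrix: it admits a one-term greedy BvN decomposition, so the entire work of $\gmn$ collapses to a single binary choice between $P^{(da)}$ and $P^{(rot)}$, which by construction selects whichever of $\dctda(P)$ and $\dctrot(P)$ is smaller.

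First, I would show that line 3 of Algorithm \ref{alg:theo:mixnet} produces a singleton when the input is a scaled permutation matrix $P = \alpha P'$, where $P'$ is the underlying $0/1$ permutation. The bipartite support of $P$ has a unique perfect matching, namely $P'$ itself, and the minimum entry along this matching equals $\alpha$. Subtracting $\alpha \cdot P'$ from $P$ leaves the zero matrix, so the greedy decomposition terminates after one iteration with $\mathcal{P}(P) = \{P'\}$ and coefficient $\alpha_1 = \alpha$. In particular $m = 1$, and $\gmn$ has only one matrix $M_1 = P$ to classify.

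Next I would apply the assignment rule in lines 6--8: $P$ enters $P^{(da)}$ if $\dctda(P) \le \dctrot(P)$, and $P^{(rot)}$ otherwise. In the first case $M^{(da)} = P$ and $M^{(rot)}$ is the zero matrix, so by Equation~\eqref{eq:dctmix} we obtain $\dctmix(P,\gmn) = \dctda(P) + \dctrot(0) = \dctda(P)$. Symmetrically, in the second case $\dctmix(P,\gmn) = \dctrot(P)$. In either case the value realized equals $\min(\dctda(P,\bvn),\, \dctrot(P,\val))$, which is the statement of the observation. Plugging the closed forms from Equations~\eqref{eq:daDCT} and~\eqref{eq:rsnDCT} into this minimum with $|P| = n\alpha$ and $m=1$ yields $\min(\alpha/r + \crec,\ 2\alpha/(\eff r))$, as a sanity check.

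The only obstacle worth flagging is making sure $\dctda$ and $\dctrot$ evaluated on the zero sub-matrix contribute nothing: for $\dctrot$ this is immediate from Equation~\eqref{eq:rsnDCT} with $|M|=0$, and for $\dctda$ the greedy BvN of the zero matrix yields $m=0$ terms, so Equation~\eqref{eq:daDCT} also gives zero. With these conventions in place, the observation reduces to a direct consequence of the fact that $\gmn$ loses all its freedom when the decomposition contains a single matching, and is then forced to emulate whichever of \csn or \rsn is better on $P$.
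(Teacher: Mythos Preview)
Your argument is correct and essentially matches the paper's reasoning: both hinge on the fact that the greedy decomposition of a scaled permutation matrix is a singleton, so \gmn{} is forced to route all of $P$ through one sub-system and thus realizes exactly $\min(\dctda(P),\dctrot(P))$. The paper frames the observation as a corollary of Claim~\ref{clm:gmn} by noting that $P$ is a degenerate instance of $M(n,m,u,L)$, whereas you prove it directly from the algorithm without invoking the claim; your treatment of the zero sub-matrix is also more explicit than the paper's. The substance is the same, only the packaging differs.
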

Note that $P$ is a special case in $M(n,m,u,L)$. The observation stems from the fact that $\gmn$ only schedules the entire matrix $P$ on either \msn or \rsn.

Next, we prove Theorem \ref{thm:gmn} which holds for any saturated demand matrix.

\begin{theorem}
For any saturated demand matrix $M$,
    \begin{align}
     \dctmix(M,\gmn)\leq \min( \dctrot(M, \val), \dctda(M, \bvn))
\end{align} 
That is, the DCT of \msn will be equal or less than the DCT of either \rsn or \csn. 
\end{theorem}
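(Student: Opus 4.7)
\textbf{Proof plan for Theorem \ref{thm:gmn}.}
The plan is to exploit two key additivity (linearity) properties of the DCT functions across a BvN decomposition, and then observe that \gmn takes a pointwise minimum over the individual pieces.
First, I would fix the decomposition $P(M)=\{P_1,\dots,P_m\}$ with coefficients $\alpha_1,\dots,\alpha_m$ that is computed inside \gmn via iterated maximum matchings, so that $M=\sum_{i=1}^m \alpha_i P_i$ and $M_i=\alpha_i P_i$. This is the same decomposition used to define $\dctda(M,\bvn)$ in Equation~\eqref{eq:daDCT}, and we re-use it throughout.

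Next, I would establish the two additivity lemmas. For \csn, applying Equation~\eqref{eq:daDCT} to a single scaled permutation $M_i$ gives $\dctda(M_i)=\alpha_i/r + \crec$, so summing yields $\sum_{i=1}^m \dctda(M_i) = |M|/(nr) + m\crec = \dctda(M,\bvn)$. For \rsn, Equation~\eqref{eq:rsnDCT} is linear in the total volume, and since $|M|=\sum_i |M_i|=\sum_i n\alpha_i$, we get $\sum_{i=1}^m \dctrot(M_i)=\dctrot(M,\val)$. The subtle point I want to be careful with here is the \csn side: the count $m$ of reconfigurations is preserved when we partition the $m$ permutations between $P^{(da)}$ and $P^{(rot)}$, because only the matrices assigned to $P^{(da)}$ incur $\crec$ each (the \rs part uses its pre-defined rotor schedule and does not contribute $\crec$ terms).

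With additivity in hand, the final step is essentially algebraic. By the construction of \gmn in \autoref{alg:theo:mixnet}, we have
\begin{align*}
\dctmix(M,\gmn) \;=\; \sum_{P_i\in P^{(da)}}\dctda(M_i) + \sum_{P_i\in P^{(rot)}}\dctrot(M_i) \;=\; \sum_{i=1}^m \min\bigl(\dctda(M_i),\dctrot(M_i)\bigr).
\end{align*}
Since $\min(a_i,b_i)\le a_i$ and $\min(a_i,b_i)\le b_i$ termwise, summing the two inequalities separately gives
\begin{align*}
\dctmix(M,\gmn) \;\le\; \sum_{i=1}^m \dctda(M_i) \;=\; \dctda(M,\bvn), \qquad
\dctmix(M,\gmn) \;\le\; \sum_{i=1}^m \dctrot(M_i) \;=\; \dctrot(M,\val),
\end{align*}
and taking the minimum of the two right-hand sides concludes the proof.

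The main obstacle I expect is not the algebra but justifying the \csn additivity cleanly: one must argue that restricting \csn to a subset $P^{(da)}$ of the decomposition genuinely uses exactly $|P^{(da)}|$ reconfigurations (so the $m\crec$ term splits additively between the two sub-systems). Once this bookkeeping is settled, the pointwise-minimum argument gives the bound immediately; no optimization over decompositions is needed, because the same decomposition is used on both sides of each inequality.
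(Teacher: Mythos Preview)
Your proposal is correct and follows essentially the same approach as the paper: both arguments rest on the linearity/additivity of $\dctda$ and $\dctrot$ across the pieces of the BvN decomposition, together with the fact that \gmn makes the pointwise-minimizing choice on each $M_i$. Your write-up is simply more explicit about the additivity bookkeeping (including the $|P^{(da)}|\cdot\crec$ term), whereas the paper compresses this into a one-line appeal to linearity and Observation~\ref{obs:PermOnMsn}.
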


\begin{proof}
To see why this is the case, let us denote the set of matchings sent on \csn as $M^{(da)} =\{M_i : \dctda(M_i)\leq \dctrot(M_i)\}$. And let $M^{(rot)}= \{M_i : \dctrot(M_i)\leq \dctda(M_i)\}$ be the set of matchings sent on \rsn. 
Since for an individual permutation matrix \msn has lower or equal DCT to either \csn or \rsn as from \autoref{obs:PermOnMsn}, and since both functions describing the DCT either system (in Equation~\eqref{eq:daDCT} and Equation~\eqref{eq:rsnDCT}) are linear in the total load of either $M^{(da)}$ or  $M^{(rot)}$, the results follows.
\end{proof}
\label{lastpageappendix}

\end{document}